\newtheorem{proposition}{Proposition}
\newtheorem{lemma}{Lemma}
\newtheorem{algorithm}{Algorithm}
\newtheorem{corollary}{Corollary}
\begin{document}

\title{Taming and Leveraging Interference in Mobile Radar Networks}
\author{Husheng Li}
\maketitle

\begin{abstract}
Mobile radar networks, such as autonomous driving systems, are subject to the severe challenge of mutual interference. Despite the inborn interference-proof capability in frequency modulation continuous waveform (FMCW) radar, interference management is necessary for dense radar networks. The approaches combatting the radar interference include the timing diversity of waveform parameters, data-driven detection interference, and signal processing based decoupling of interference. Moreover, the leverage of radar interference for inferring the information of interfere is studied. Simulations are carried out in the scenario of radar interferences in vehicular networks, which demonstrates the performance of the proposed algorithms. 

\end{abstract}

\section{Introduction}

Radar technology, as a powerful tool to sense the environment, has found many applications in modern cyber physical systems (CPSs), such as unmanned aerial vehicular (UAV) networks and autonomous vehicle networks. In these systems, each node uses radar to monitor the positions and speeds of neighboring nodes, as well as other targets in the environment, for the avoidance of collision or formation control. When there are multiple radar transceivers, they are subject to mutual interference (either direct or indirect), as illustrated in Fig. \ref{fig:cars}. Different from wireless data communications, in which interference is a major concern, there have been relatively few studies on the interference mitigation of radar signals. Although radar systems have a certain inherent capability of combating interference (e.g., being removed by the smoothing capability of Kalman filtering, or filtered out after being mixed with the local oscillator in frequency modulation continuous waveform (FMCW) radar receivers), in new applications in which there are unprecedentedly many radar transceivers (e.g., there could be hundreds of radars within hundreds of meters for UAV networks or autonomous driving networks), the interference could be very intensive and be beyond the traditional interference-proof capability of radar systems. As has been shown by existing studies, the interference can incur significant estimation errors of positions and speeds, which could be detrimental to the corresponding applications, e.g., collisions of vehicles or the loss of UAV formations. Therefore, there is a pressing demand for the study of interference management in radar networks, as indicated in the European Commission report \cite{MOSARIM2012}.

Despite the pressing need and various preliminary researches, there still lacks a systematic study on the mitigation of radar interference, in both the physical (PHY) and medium access (MAC) layers. For the PHY layer, in existing studies, simple approaches have been adopted, such as detecting the interference by abnormally high power and then eliminating the interference using a notch filter in the time or frequency domain. There are even less studies in the MAC layer. Most existing studies on radar interference management are focused on the PHY layer; however, in the context of massive radar networks, it is of key importance to efficiently allocate the sensing resource, namely power and bandwidth to different radar transceivers, thus requiring efficient solutions in the MAC layer. To avoid interference, different radar transceivers can be allocated different frequency bands, thus realizing an FDMA-like solution. However, modern radar systems usually use large bandwidth for accurate sensing performance; e.g., a bandwidth of 2GHz is used by the 77GHz millimeter wave (mmWave) radar. Even the large bandwidth of mmWave band cannot assure interference avoidance using orthogonal spectrum usage for many (say 100) radars. Hence, the MAC layer design of radar networks needs to take the inevitable impact of interference into account.

\begin{figure}[!t]
\centering
\includegraphics[width=2.7in]{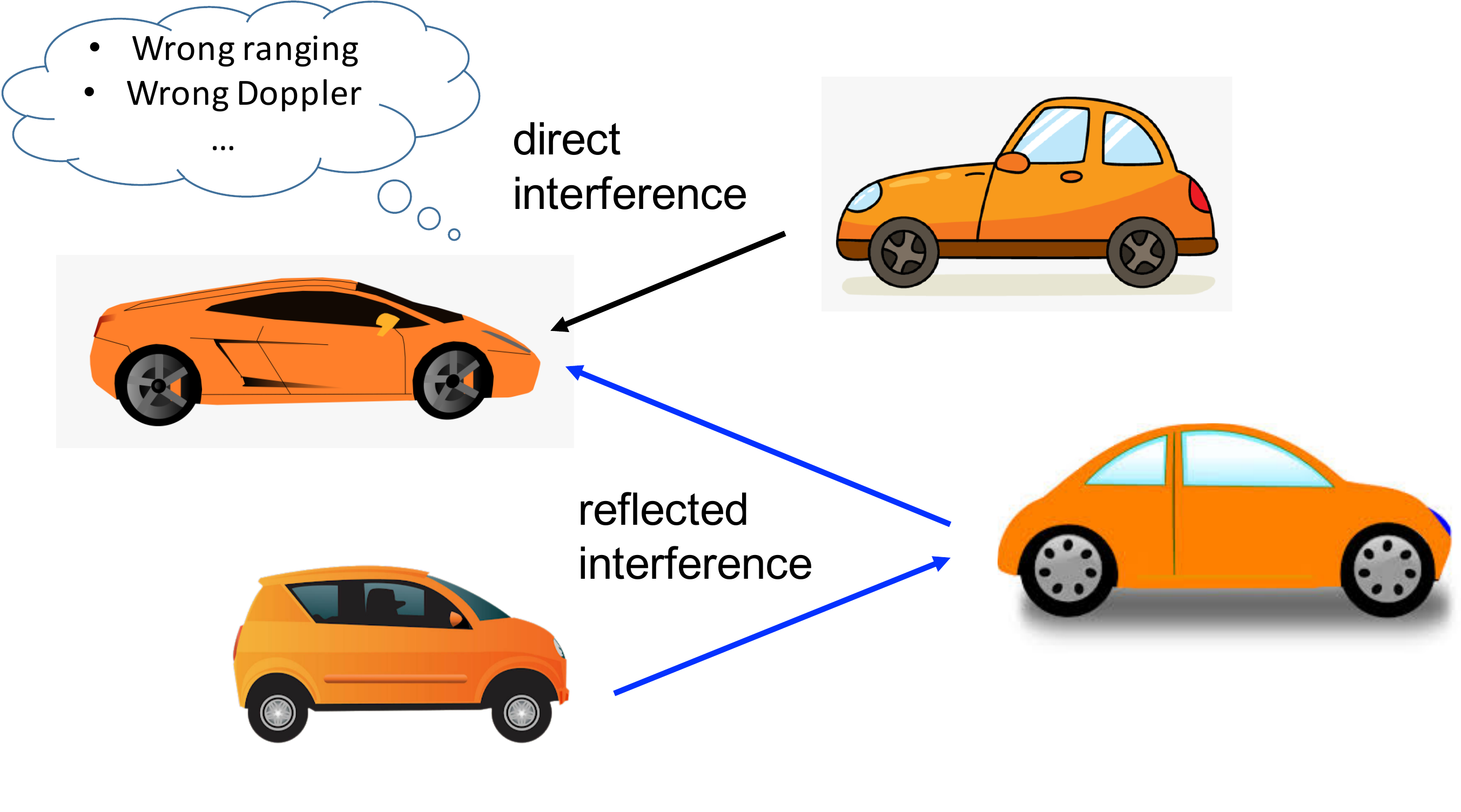}
\caption{An illustration of the interference of vehicle radars}\label{fig:cars}
\label{pattern}
\end{figure}

In this paper, we provide a systematic study on the mitigation and leverage of radar interference. The following approaches will be studied:
\begin{itemize}
\item Interference avoidance and average: Various approaches of diversifying the waveform parameters are used to either reduce or average the radar interference. 
\item Interference detection: This is to determine whether the received signal is interference, which could be based on the features of interference and be trained using data-driven approach.
\item Interference decoupling: Using the framework of multiuser sensing proposed in this paper, the signal and interference are expected to be decoupled for further analysis. 
\item Interference leverage: When the interference can be detected and decoupled, it provides information about the interferer, which converts the harm to benefit. 
\end{itemize}

The remainder of this paper is organized as follows. The related research is introduced in Section \ref{sec:related}. Then, the system model of radar network and interference are explained in Section \ref{sec:model}. The probabilistic analysis on the interference is given in Section \ref{sec:analysis}. The detection and avoidance of radar interference are discussed in Sections \ref{sec:detection} and \ref{sec:avoidance}, respectively. The methodology of leveraging radar interference for information inference is introduced in Section \ref{sec:leverage}. A framework of multiuser sensing is detailed in Section \ref{sec:multiuser}. The numerical simulation results and conclusions are provided in Sections \ref{sec:numerical} and \ref{sec:conclusions}, respectively. 

\section{Related Works}\label{sec:related}
In this section, we introduce the existing research related to this paper.

\subsection{Radar Interference} In the MOSARIM report \cite{MOSARIM2012}, seven scenarios of radar interference and nine possible solutions have been discussed. The corresponding feasibility has been analyzed with either experimental results or simulations. A more recent report on the same issue is given in \cite{NHTSA2018}, which summarized the radar interference in different scenarios. Various approaches have been proposed for different types of radar systems. In the MOSARIM report, it is claimed that the constant false alarm rate (CFAR) receiver is the most effective to radar interference by decreasing the radar sensitivity, which has been adopted in the Texas Instrument radar due to its simplicity \cite{Mani2019}. However, its degradation of radar performance has been questioned in the US-DOT report \cite{NHTSA2018}. Comprehensive surveys can be found in \cite{Aydogdu2020}. A simple approach is to detect abnormally high received power and then use a notch filter to remove the interference \cite{Brooker2007,Bechter2015}. The structure of FMCW signal is used to interpolate the original waveform for interference cancellation. When multiple antennas are available, legitimate signal can use a different subspace from the interference \cite{Tian2018}. Deep learning has been used to identify radar interference \cite{Mun2018}. In the MAC layer, either frequency hopping \cite{Luo2013} or orthogonal sequence \cite{Skaria2019} has been proposed to avoid interference. There are also some theoretical studies on the damage of interference, e.g., the stochastic geometric analysis in \cite{Munari2018,Hourani2018,Chu2020,Ram2020}. In particular, the ghost image \cite{Goppelt2011,Kim2018} is a common phenomenon due to radar interference: a false target is detected by a radar interfered by another radar transmitter, due to the incoherence of phase. The mutual interference between radar and communications has been studied in \cite{Aydogdu2019}. In a contrast to the schemes proposed by us, which includes stagewise management and information inference using interference, these existing studies focus on only one stage of interference avoidance of individual radars, and do not consider the benefit from networking and the interference itself; in addition, they are more ad hoc and do not have a systematic mathematical framework.

\subsection{Communication Interference Management} Interference is a major concern in communication networks. Various multiple access schemes, such as TDMA or OFDM, can avoid interference due to the orthogonal usage of time or frequency \cite{Proakis2018}. In multi-cell cases, frequency reuse can be adopted such that neighboring cells may use different frequency bands according to the reuse factor \cite{JunyiLi2013}. When interference is inevitable, interference averaging, e.g., using frequency hopping or CDMA, can help to diversify the interference, in order to avoid long-term dominating interference \cite{Tse2005}. Meanwhile, power control is an important technique to reduce unnecessary interference \cite{Goodman2000,Yates1995}. The cross-layer design is a powerful mechanism for interference management in wireless communication networks \cite{Kelly1998,Chiang2007,Palomar2006,XiaojunLin2006}. All these techniques, which have been widely employed in practical communication systems, provide insights and motivations for the interference management in radar networks. However, communication interferences have significantly different physical mechanisms, and can usually be well coordinated by fixed base stations (while being infeasible in mobile radar networks); therefore, the corresponding interference management schemes cannot be applied to radar networks straightforwardly.

\subsection{Radar Jamming/Deception} The interference of radar could be intentional, e.g., the jamming in a denial-of-service manner and deception. There have been substantial studies on jamming and deception of radar signals \cite{Schuerger2008,Schuerger2009,Greco2008,Maithripala2005,Bo2011}. The radar jamming \cite{Schuerger2008,Schuerger2009,Greco2008} is to use high power to suppress the legitimate radar signal, while the deception is to spoof the radar receiver and use fake signal to mislead the sensing procedure \cite{Maithripala2005,Bo2011}. Frequency hopping is an effective approach to combat the radar jamming, while authentication in the signal level is the way to avoid deception. The proposed project addresses unintentional interference, thus being different from the radar jamming and deception. 

\section{System Model}\label{sec:model}
In this section, we introduce the model of the radar network and the radar operation. 

\subsection{Motion Dynamics}
We consider $N(t)$ mobile radar transceivers in a radar network, in which radar transceivers can communicate to neighboring ones. Here $N(t)$ changes with time, namely there could be leaving or arriving transceivers. The motion law of the transceivers is not specified, except for the detailed simulations in the numerical results. For simplicity, we consider the motions of the radar transceivers in the plane $\mathbb{R}^2$. 

\subsection{Radar Waveform and Reception}
We assume that each radar transceiver sends out beams of radar waveforms. We assume FMCW radar waveforms, whose transmit waveform is given by (assuming that a pulse is sent out at time 0)
\begin{small}
\begin{eqnarray}
p(t)=\sum_{n=-\infty}^{\infty}AW(t-nT_p)\sin\left(\int_{0}^{t-nT_p} 2\pi f(s) ds+\theta_n\right),
\end{eqnarray}
\end{small} 
where $T_p$ is period of radar pulses, $W$ is the rectangle window function between time 0 and $T_c$, $T_c$ is the time duration of each radar pulse (chirp) and $\theta_n$ is the phase in the $n$-th pulse. The instantaneous frequency $f(t)$ is given by
\begin{eqnarray}
f(t)=f_0+St,
\end{eqnarray}
where $f_0$ is the initial frequency and $S$ is the increasing slope of the frequency. 

When multiple radar transceivers are considered, the time offsets of the $N$ transceivers are denoted by $\tau_n$, $n=1,...,N$, where we assume $\tau_1=0$. At transceiver $n$, the received signal is given by
\begin{eqnarray}
r_n(t)&=&\sum_{m=1}^N\sum_{n=-\infty}^{\infty}A_{mn}W(t-nT_p-\tau_m-\tau_{mn})\nonumber\\
&&\sin\left(\int_{-\infty}^{t-nT_p-\tau_m-\tau_{mn}} 2\pi f(s) ds+\theta_m\right),
\end{eqnarray} 
where $\tau_{mn}$ is the traveling time of the pulse from transceiver $m$ to $n$ (possibly reflected at an intermediate reflector) and $A_{mn}$ is the corresponding amplitude of the received signal. 

\subsection{Ranging}
When the reflected signal is intercepted by the radar transceiver, it is mixed with a local oscillation, as illustrated in Fig. \ref{fig:FMCW}, whose output is proportional to (suppose that the starting time is 0 and there is no interference)
\begin{small}
\begin{eqnarray}\label{eq:beat}
&&o(t)\propto \underbrace{\cos\left(2\pi S\tau t-\pi S\tau^2+2\pi f_0\tau\right)}_{\text{beat signal: freq. = input freq. - local freq.}}\\
&+&\underbrace{\cos\left(\pi S(t-\tau)^2+2\pi f_0(t-\tau)+\pi St^2+2\pi f_0t+2\theta_0\right)}_{\text{high frequency, removed by IF filter}},\nonumber
\end{eqnarray}
\end{small}
where $\tau$ is the round trip time of the FMCW pulse. The second term has a high frequency and will be filtered out by an IF filter. The first term is a sinusoidal function with beat frequency $S\tau$. This procedure is illustrated in Fig. \ref{fig:FMCW}. By estimating the beat frequency $S\tau$, the radar transceiver evaluates the round trip time and computes the distance by $0.5c\tau$, where $c$ is the light speed. The target velocity can be estimated from the corresponding Doppler shift. 

An interference will add perturbation to the IF filter output in (\ref{eq:beat}), thus impairing the subsequent processing. For example, when an FMCW radar is interfered by another FMCW radar having the same frequency slope $S$, the output of the IF filter is given by
\begin{small}
\begin{eqnarray}\label{eq:beat2}
o'(t)=A_1 \underbrace{\cos\left(2\pi S\tau t+\theta\right)}_{\text{legitimate signal}}+
A_2\underbrace{\cos\left(2\pi S\tau' t+\theta'\right)}_{\text{interference}},
\end{eqnarray}
\end{small}
where $A_1$ and $A_2$ are the amplitudes of signal and interference, respectively, $\theta$ and $\theta'$ are the corresponding phases, and $\tau'$ is the time offset between the interfering radar pulse and local oscillation (including the propagation time and starting time difference). When $A_2$ is significant, the estimation on the beat frequency $S\tau$ (thus $\tau$ and the distance) will be substantially impaired. This framework can also be extended to other radar waveforms, such as OFDM radar and pulse radar.

\begin{figure}[!t]
\centering
\includegraphics[width=0.45\textwidth]{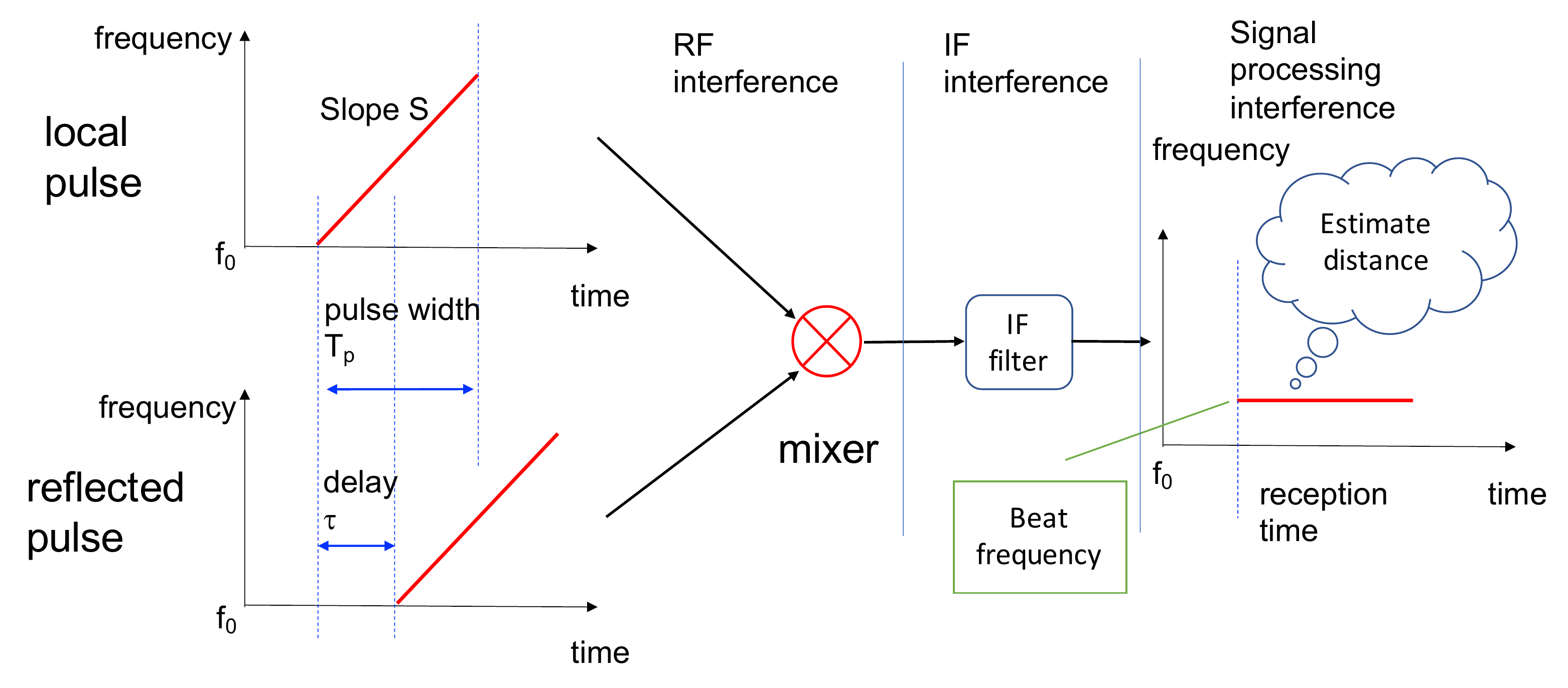}
\caption{An illustration of FMCW radar system}\label{fig:FMCW}
\label{pattern}
\end{figure}

\section{Probabilistic Analysis of Interference}\label{sec:analysis}
In this section, we analyze the probability of interference given the above FMCW radar model. 

\subsection{Hierarchy of Interference}
The focus of this paper is the interference in radar networks. There exist different interferences at different stages of the receiver. Therefore, we consider the following hierarchy of interference, as illustrated in Fig. \ref{fig:FMCW}.

\begin{itemize}
\item RF interference: The interference in the RF band, which exists whenever an interfering pulse overlaps the received pulse.
\item IF interference: The received signal will be mixed with the local pulse and then filtered by an IF filter. Therefore, an RF interference may be filtered out if its frequency is too deviated from the local frequency. If the timing of the interfering pulse is very close to that of the local pulse, the interference may pass through the IF filter, thus causing IF interference. 
\item Decision interference: The IF interference may still be removed by the decision maker. For example, if the time duration of the IF interference is too short (e.g., less than $T_{\min}$ for the radar functionality), it can be determined to be an interference, and thus be discarded. 
\end{itemize}

\subsection{Probability of Interference}

\subsubsection{Interference from Single Interferer}
We first consider a single interferer, say radar $j$, which lays the foundation for the generic case. Since the radar pulses are highly directional, it is very rare for the pulses from one interferer to reach the victim radar (say radar $i$) receiver directly. Therefore, we assume that the victim radar receiver receives the radar pulses via reflection over an object that the interferer is exploring (namely the indirect interference). 

When the pulses from victim radar hits no target, only the interfering radar pulses enter the receiver. According to the mechanism in Section \ref{sec:model}, only when a received pulse arrives within time period $[0,T_p-T_{\min}]$, does the mixing with the local pulse (without loss of generality, within time $[0,T_p]$) and the IF filter output data with beat frequency for ranging. If no special action is taken, the victim radar will claim that there is a target, but with wrong distance and wrong angle. 
Then, we have the following simple proposition for the conditions of wrong ranging due to radar interference.
\begin{proposition}\label{prop:wrong}
Radar $i$ is interfered by radar $j$ if and only if
\begin{eqnarray}\label{eq:conflict1}
\left\lfloor \frac{T_p-T_{\min}+\tau_i-\tau_j-\tau_{ji}}{T_p}\right\rfloor-\left\lfloor \frac{\tau_i-\tau_j-\tau_{ji}}{T_p}\right\rfloor=1,
\end{eqnarray}
where $T_{ij}=\frac{d_{jo}+d_{oi}}{c}$.
\end{proposition}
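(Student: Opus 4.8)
The plan is to recast the physical interference condition of Section~\ref{sec:analysis} as a timing constraint and then reduce a double-index existence statement to the single floor identity~(\ref{eq:conflict1}). First I would place the reference clock at radar $i$'s local oscillator, so that its $m$-th chirp occupies the interval $[\tau_i+mT_p,\ \tau_i+mT_p+T_p]$ in absolute time, using the modeling convention that the chirp fills the period. The $n$-th pulse of radar $j$ departs at $nT_p+\tau_j$, is reflected by the object that $j$ probes, and reaches $i$ at $nT_p+\tau_j+\tau_{ji}$ with $\tau_{ji}=(d_{jo}+d_{oi})/c$. By the decision-interference criterion, this pulse corrupts the ranging of local chirp $m$ exactly when it lands in the admissible sub-window $[\tau_i+mT_p,\ \tau_i+mT_p+T_p-T_{\min}]$, since a later arrival yields fewer than $T_{\min}$ seconds of beat signal and is discarded. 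Hence radar $i$ is interfered by radar $j$ iff there exist integers $n,m$ with
\[
\tau_i+mT_p\ \le\ nT_p+\tau_j+\tau_{ji}\ \le\ \tau_i+mT_p+T_p-T_{\min}.
\]

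Second, I would collapse the two indices into one. Subtracting $mT_p$ throughout and setting $\ell=n-m$, $a:=\tau_i-\tau_j-\tau_{ji}$, and $b:=T_p-T_{\min}$, the displayed condition becomes the existence of an integer $\ell$ with $\ell T_p\in[a,\,a+b]$. Since $0\le T_{\min}\le T_p$ we have $0\le b<T_p$, so the proposition reduces to the arithmetic claim: for $0\le b<T_p$, the interval $[a,a+b]$ contains a multiple of $T_p$ if and only if $\lfloor (a+b)/T_p\rfloor-\lfloor a/T_p\rfloor=1$.

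Third, I would prove this by the standard remainder argument. Writing $a=qT_p+r$ with $q\in\mathbb{Z}$ and $0\le r<T_p$, we get $\lfloor a/T_p\rfloor=q$ and $\lfloor(a+b)/T_p\rfloor=q+\lfloor(r+b)/T_p\rfloor$; as $0\le r+b<2T_p$, the extra floor is $0$ or $1$, and it equals $1$ exactly when $r+b\ge T_p$, i.e.\ $r\ge T_{\min}$. On the other hand $[a,a+b]=[qT_p+r,\ (q+1)T_p+(r-T_{\min})]$ has length $b<T_p$, hence contains at most one multiple of $T_p$, and it contains $(q+1)T_p$ precisely when $r-T_{\min}\ge0$. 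The two conditions coincide, which proves the equivalence; the single boundary case $r=0$ (an interferer pulse grazing the edge of the window) is absorbed by taking the admissible window half-open and does not affect the statement.

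The floor identity is elementary, so the real work sits in the first step: correctly superposing the two period-$T_p$ pulse trains, fixing the composition and signs of $\tau_i$, $\tau_j$, $\tau_{ji}$, and justifying that the admissible arrival window has length exactly $T_p-T_{\min}$ under the receiver model of Section~\ref{sec:model}; everything downstream is a change of variables plus a one-line computation. I would also make explicit the standing assumptions already in force here --- a reflector exists and the interferer's reflected pulse arrives with non-negligible amplitude --- under which ``interfered'' means ``a false target with wrong range and angle is declared.''
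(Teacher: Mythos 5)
Your proposal is correct and follows essentially the same route as the paper: you set up the identical two-sided timing condition on integers $m,n$ (the paper's starting inequality), collapse to $\ell=n-m$, and reduce to the existence of an integer multiple of $T_p$ in an interval of length $T_p-T_{\min}$, identified with the floor identity~(\ref{eq:conflict1}). The only difference is that you make explicit the remainder argument and the measure-zero boundary case (arrival exactly at the window edge) that the paper's proof simply asserts, which strengthens rather than changes the argument.
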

\begin{proof}
For nonzero output of the IF output with at least duration $T_{\min}$, there exist integers $m$ and $n$ such that 
\begin{eqnarray}\label{eq:conflict2}
0&\leq& \tau_j+nT_p+\tau_{ji}-\tau_i-mT_p\nonumber\\
&\leq&T_p-T_{\min}, 
\end{eqnarray}
which is equivalent to 
\begin{eqnarray}\label{eq:conflict3}
n-m\geq \frac{\tau_i-\tau_j-\tau_{ji}}{T_p},
\end{eqnarray}
and
\begin{eqnarray}\label{eq:conflict4}
n-m\leq \frac{T_p-T_{\min}+\tau_i-\tau_j-\tau_{ji}}{T_p}.
\end{eqnarray}

Then, the solution exists when there exists an integer between the righthand sides of (\ref{eq:conflict3}) and (\ref{eq:conflict4}), which is equivalent to (\ref{eq:conflict1}). This concludes the proof.
\end{proof}

Based on the union bound, a simple corollary of Prop. \ref{prop:wrong} is given below, which shows the probability that the interference causes damage to the victim radar receiver. 
\begin{corollary}\label{cor:wrong}
Suppose that the arrival time of the interfering pulses is uniformly distributed, which is reasonable over all geometric formations of the network. Then, the probability of interference is upper bounded by
\begin{eqnarray}\label{eq:prob}
p_I\leq 
\frac{(N-1)(T_p-T_{\min})}{T_0}.
\end{eqnarray}
\end{corollary}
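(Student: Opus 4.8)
\noindent The plan is to convert the exact pairwise test of Proposition~\ref{prop:wrong} into a statement about a single timing variable, bound the probability that one fixed interferer triggers it, and then sum over the $N-1$ potential interferers with a union bound. Fix the victim radar $i$ and a candidate interferer $j$, and let $\Delta_j$ be the arrival offset of radar $j$'s pulse relative to radar $i$'s local pulse, reduced modulo $T_p$ to lie in $[0,T_p)$; concretely $\Delta_j\equiv\tau_j+\tau_{ji}-\tau_i\pmod{T_p}$. Writing $\tau_i-\tau_j-\tau_{ji}=qT_p+r$ with $q\in\mathbb{Z}$ and $r\in[0,T_p)$, a one-line computation shows that the floor difference in (\ref{eq:conflict1}) equals $1$ exactly when $r\ge T_{\min}$, i.e.\ when $\Delta_j\in[0,T_p-T_{\min}]$, and equals $0$ otherwise. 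So, up to a boundary set of measure zero, Proposition~\ref{prop:wrong} reads: radar $j$ interferes with radar $i$ iff $\Delta_j$ lies in a fixed ``bad'' interval $B=[0,T_p-T_{\min}]$ of Lebesgue length $T_p-T_{\min}$.

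Next I would invoke the modelling hypothesis: the arrival times of the interfering pulses are uniformly distributed over the relevant observation window of length $T_0$, which is a reasonable surrogate for averaging over the admissible geometric formations of the network. Under this hypothesis $\Pr[\text{radar }j\text{ interferes radar }i]=\Pr[\Delta_j\in B]\le (T_p-T_{\min})/T_0$. I keep ``$\le$'' rather than ``$=$'' in order to absorb the caveats stated just before Proposition~\ref{prop:wrong}: the interfering pulse must in fact reach radar $i$ via a reflector (the beams are highly directional), and even a timing collision need not always carry a damaging amplitude $A_2$.

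Finally, radar $i$ is interfered precisely when at least one of the $N-1$ other radars meets the test of Proposition~\ref{prop:wrong}, so the event of interest is $\bigcup_{j\neq i}\{\Delta_j\in B\}$ and the union bound gives $p_I\le\sum_{j\neq i}\Pr[\Delta_j\in B]\le (N-1)(T_p-T_{\min})/T_0$, which is (\ref{eq:prob}).

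The routine part is the union bound in the last step. The genuine obstacle lies in the probabilistic step: one must justify that, as the node positions and clock offsets $\tau_j$ sweep the admissible configurations, the wrapped offset $\Delta_j$ is (at worst) uniformly distributed, and — importantly — that the correct normalizing length is $T_0$ rather than $T_p$, i.e.\ that effectively a single relevant interfering pulse falls in the window $[0,T_0)$ so that the periodic structure of the floor condition does not reintroduce extra copies of $B$. It is also worth noting that the union bound is loose once several interferers are simultaneously likely, so (\ref{eq:prob}) should be read as a worst-case estimate, tight only in the sparse-interference regime where the events $\{\Delta_j\in B\}$ are essentially disjoint.
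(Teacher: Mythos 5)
Your proof is correct and takes essentially the same route the paper intends: the corollary is stated in the paper without a written proof, being justified only by the union bound over the $N-1$ potential interferers with per-interferer probability $(T_p-T_{\min})$ divided by the length of the uniform timing window, which is exactly what you obtain after correctly reducing the floor condition of Proposition~\ref{prop:wrong} to the event $\Delta_j\in[0,T_p-T_{\min}]$. Your flagged concern about the normalizing length $T_0$ is legitimate but is an issue with the paper rather than with your argument: $T_0$ is never defined, and the paper's own later use of this bound (in the proof of Proposition~\ref{prop:bound} in Appendix~\ref{appx:bound}) takes the per-interferer probability to be $(T_p-T_{\min})/T_p$, so reading the uniform window as the pulse repetition period is the intended interpretation.
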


When the victim radar receives both reflected pulses from its own illumination and interference, the analysis is the same as the the case of no detected target. The conclusion in Prop. \ref{prop:wrong} is still valid. The only difference is that, in the case of existing object, the victim radar receiver has two beat frequencies from the IF filter: one from its own illumination and one from the interference (the probability for the two frequencies to be identical is zero). Therefore, the victim receiver can claim that there are two targets, which implies the existence of interference. However, without further action, the receiver does not know which target is the one it correctly detects. 

\subsubsection{Interference from Many Interferers}

A precise analysis on the interference is difficult, due to the complexity of the geometry of the radars, as well as the randomness in the propagation of radar signal. Therefore, we adopt a simpler model for facilitating the analysis. We assume that each radar monitors only its neighbors with a distance $d_s$, where the subscript $s$ means sensing. At each time, each radar transmitter selects one of its neighbors to sense. It is assumed that for each sensing each radar can always locate the desired neighbor, which is reasonable since it is easy for radar transmitter to predict the mechanical motions of neighbors. During each sensing period, each radar receiver is interfered by the reflections of its neighbors (within a distance of $d_s$) that are being sensed (illuminated) by another radar transmitter. We further assume that $N=\infty$ and the positions of nodes are 2-dimensional Poisson processes with density $\lambda$. Since the the round trip distance is $2d_s$ when a target of distance $d_s$ is sensed, it is reasonable to assume that a transmitter within a distance of $2d_s$ can cause interference, when the reflection loss is omitted. The subsequent conclusions can be easily extended to the case of nonzero reflection loss (which is valid in practical scenarios).

The following proposition justifies the assumption of uniform distribution of received interference pulses in the above analysis, given the assumption that the radar transceivers are not time synchronized.
\begin{lemma}\label{lem:uniform}
When the pulse starting times are independent among the radar transmitters, the starting time of each interfering pulse is uniformly distributed in the corresponding pulse repetition period (namely the period containing the interfering pulse) of the victim radar receiver. 
\end{lemma}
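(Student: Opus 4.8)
\emph{Proof proposal.} The plan is to reduce the statement to the translation invariance of Lebesgue measure on a circle. Fix the victim radar $i$ and one interferer $j$. The victim's pulse-repetition grid partitions the time axis into the intervals $I_m=[\tau_i+mT_p,\tau_i+(m+1)T_p)$, $m\in\mathbb{Z}$. Each interfering pulse registered by $i$ is, up to sub-chirp detail, the delayed copy of some chirp of $j$: the $n$-th chirp of $j$ departs at $\tau_j+nT_p$ and is received at $i$ at absolute time $t_n=\tau_j+\tau_{ji}+nT_p$ (the reflection over the illuminated object being absorbed into $\tau_{ji}$). Whatever interval $I_{m(n)}$ happens to contain $t_n$, the starting position of that pulse inside its own repetition period is $s_n:=t_n-\bigl(\tau_i+m(n)T_p\bigr)=(\tau_j+\tau_{ji}-\tau_i)\bmod T_p$, which is independent of both $n$ and $m(n)$. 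Hence the lemma reduces to showing that the single random variable $U_{ij}:=(\tau_j+\tau_{ji}-\tau_i)\bmod T_p$ is uniform on $[0,T_p)$, and that the analogous variables for distinct interferers are mutually independent.

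First I would model the phase of each transmitter's free-running, uncoordinated clock: since the radars are not time synchronized, the offset of $j$'s pulse train relative to any fixed $T_p$-periodic reference carries no information, so $\tau_j\bmod T_p$ is taken to be uniform on $[0,T_p)$ (equivalently, $\tau_j$ uniform over an interval consisting of an integer number of periods). Writing $U_{ij}=(X+C)\bmod T_p$ with $X:=\tau_j\bmod T_p$ and $C:=\tau_{ji}-\tau_i$, translation invariance of Lebesgue measure on the circle $\mathbb{R}/T_p\mathbb{Z}$ gives $U_{ij}\sim\mathrm{Unif}[0,T_p)$; this holds whether $C$ is deterministic and fixed by the geometry or itself random, as long as it is independent of $\tau_j$ (condition on $C$, apply the invariance, integrate out). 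Repeating the argument for every interferer $j'$ and using that the $\tau_{j'}$ are independent across transmitters yields uniformity for every interfering pulse as well as mutual independence of the positions contributed by different interferers.

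The step I expect to be the crux is the passage from ``unsynchronized clocks'' to ``uniform phase modulo $T_p$'': independence of the $\tau_{j}$ by itself does not force any single one of them to be uniform, so one must either adopt the uniform (maximum-entropy) prior on the clock phase outright, as above, or argue in a limit --- if $\tau_j$ has a density that is essentially flat on the scale $T_p$ (e.g., the transmitter powered up at some uncontrolled time long before the observation window), then the $\bmod$-$T_p$ reduction converges in total variation to $\mathrm{Unif}[0,T_p)$, with error controlled by $T_p$ times the variation of that density. I would state the lemma under the idealized uniform-phase assumption and append this smoothing estimate as a remark. A minor point that must be checked but causes no real difficulty is that the frame index $m(n)$ containing a given interfering pulse is itself random; since $s_n$ does not depend on $m(n)$, no hidden conditioning arises. (Were the interferer's period to differ from $T_p$, the $s_n$ would no longer be constant in $n$ and one would instead invoke Weyl equidistribution of $\{n\delta\}$ for the generically irrational normalized period mismatch $\delta$ --- but that lies outside the common-$T_p$ model used here.)
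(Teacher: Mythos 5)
Your proposal is correct and takes essentially the same route as the paper: reduce the claim to the uniformity of the arrival phase modulo $T_p$, then show that adding an independent propagation delay leaves the uniform distribution invariant (the paper carries out this translation-invariance step as an explicit density integral over the circle $[0,T_p]$, exactly your Lebesgue-invariance argument). The one difference is that the paper simply asserts that $t \bmod T_p$ is uniform ``due to the assumption of independent starting times,'' whereas you correctly identify this as an extra modeling assumption (uniform clock phase, or a flat-density limit) rather than a consequence of independence --- so your treatment of that crux is, if anything, more careful than the paper's.
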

\begin{proof}
Suppose that the starting time of transmission of a certain interfering pulse is $t$ and the propagation time is $t'$ (from the interfering radar transmitter to the victim radar receiver). The statement of the proposition is equivalent to saying that the random variable $x=t+t'\mbox{ mod }(T_p)$ is uniformly distributed in $[0,T_p]$. 

It is easy to see that, due to the assumption of independent starting times, $t\mbox{ mod } (T_p)$ is uniformly distributed in $[0,T_p]$. We claim that the distribution of $t'$ does not affect the distribution of $x=t+t'\mbox{ mod }(T_p)$. This can be proved using the following simple argument, where $x_0$ and $x_0'$ are arbitrary numbers within $[0,T_p]$.
\begin{eqnarray}
p_x(x_0)&=&\int_0^{T_p}\sum_{s':s+s'mod(T_p)=x_0}p_t(s)p_{t'}ds\nonumber\\
&=&\int_0^{T_p}\sum_{s':s+s'mod(T_p)=x_0}p_t(s-(x_0-x_0'))p_{t'}ds\nonumber\\
&=&\int_0^{T_p}\sum_{s':s+s'mod(T_p)=x_0'}p_t(s)p_{t'}ds\nonumber\\
&=&p_x(x_0'),
\end{eqnarray}
namely the probability densities of $x=x_0$ and $x=x_0'$ are the same. This concludes the proof.
\end{proof}

The following proposition shows an upper bound for the probability of being interfered for a radar receiver. The proof is given in Appendix \ref{appx:bound}.
\begin{proposition}\label{prop:bound}
When the positions of radar transceivers are 2-dimensional Poisson distributed with density $\lambda$, the probability for an arbitrary radar receiver to be interfered, which is denoted by $p_I$, is upper bounded by
\begin{eqnarray}
p_I\leq 1-e^{-\left(\frac{T_p-T_{\min}}{T_p}\right)\lambda \pi d_s^2}.
\end{eqnarray}
\end{proposition}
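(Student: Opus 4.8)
The plan is to bound from below the probability $1-p_I$ that an arbitrary victim radar receiver sees no interference, and then to average over the Poisson geometry using its probability generating function. Place the victim receiver at the origin. Following the model of Section~\ref{sec:analysis}, the pulses capable of reaching the victim as interference are those reflected off the radars within the victim's own sensing radius $d_s$; since the radar positions form a $2$-D Poisson process of density $\lambda$, the number $K$ of such potential reflectors is Poisson with mean $\mu=\lambda\pi d_s^2$. Conditioning on $K$ is what will produce the exponential.

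I would next isolate the contribution of a single potential reflector $k$. For $k$ to actually inject an interfering beat signal into the victim, it must (i) currently be illuminated by some transmitter other than the victim, and (ii) have the corresponding reflected pulse arrive at the victim within the bad window $[0,T_p-T_{\min}]$ of the victim's repetition period --- precisely the condition of Proposition~\ref{prop:wrong}. By Lemma~\ref{lem:uniform} the arrival epoch of that pulse, reduced modulo $T_p$, is uniform on $[0,T_p]$ regardless of the intervening propagation delays, so (ii) has probability $q:=\frac{T_p-T_{\min}}{T_p}$; together with (i), reflector $k$ interferes with probability at most $q$.

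The crux is that, conditioned on the radar positions, these per-reflector events decouple. Since each transmitter senses exactly one of its own neighbors at a time, no transmitter illuminates two different neighbors of the victim, so the illuminators of distinct victim-neighbors are themselves distinct, and the pulse start times of distinct transmitters are independent; by Lemma~\ref{lem:uniform} each such start time induces an independent $\mathrm{Uniform}[0,T_p]$ arrival epoch modulo $T_p$. Conditioning also on the (random) assignment of illuminators to the victim's neighbors, the events in (ii) are therefore mutually independent, and
\[
\Pr\{\text{no interference}\mid K=k\}\ \geq\ (1-q)^{k},
\]
a lower bound that is the same for every assignment and hence persists after de-conditioning. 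Averaging against the Poisson law of $K$ via its generating function $\mathbb{E}[z^{K}]=e^{\mu(z-1)}$ evaluated at $z=1-q$ gives
\[
1-p_I\ =\ \Pr\{\text{no interference}\}\ \geq\ \mathbb{E}\!\left[(1-q)^{K}\right]\ =\ e^{-\mu q}\ =\ e^{-\left(\frac{T_p-T_{\min}}{T_p}\right)\lambda\pi d_s^2},
\]
which rearranges to the claimed bound; this is in fact tighter than the Poisson average of the union bound of Corollary~\ref{cor:wrong}, since $1-e^{-x}\leq x$.

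The step I expect to be the main obstacle is the independence claim: one must check carefully that the event ``reflector $k$ is illuminated by a non-victim transmitter \emph{and} its reflected pulse lands in the bad window'' genuinely factorizes over $k$, which rests on the one-neighbor-per-transmitter scheduling rule together with the independence of pulse start times underlying Lemma~\ref{lem:uniform}. A secondary point worth making explicit is that the reflectors capable of causing interference are exactly those within distance $d_s$ of the victim (hence the area $\pi d_s^2$), and that replacing each true per-reflector probability by the bound $q$ --- thereby ignoring neighbors that are unilluminated or illuminated by the victim itself --- is what turns the identity $e^{-\mu q}$ into an inequality for $1-p_I$.
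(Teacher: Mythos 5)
Your proposal is correct and follows essentially the same route as the paper's proof in Appendix~\ref{appx:bound}: condition on the Poisson number of nodes within distance $d_s$, assume (worst case) each is illuminated and interferes independently with probability $q=\frac{T_p-T_{\min}}{T_p}$ via Lemma~\ref{lem:uniform}, and average $(1-q)^{K}$ over the Poisson law to obtain $1-e^{-q\lambda\pi d_s^2}$. Your explicit justification of the per-reflector independence (one neighbor illuminated per transmitter, independent start times) and the remark that this beats the union bound are welcome additions, but the underlying computation is identical to the paper's.
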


Another metric to characterize the level of radar signal interference is the expected number of interfering signals, denoted by $E[I]$. The higher the expectation is, the more ambiguity the radar receiver faces. The following proposition provides an explicit expression for this expectation. The proof is given in Appendix \ref{appx:proof}.
\begin{proposition}\label{prop:prob}
The expected number of interfering radar signal that a random radar receiver receives is given by
\begin{eqnarray}
E[I]=\frac{1}{2}\lambda d_s^2\left(\frac{T_p-T_{\min}}{T_p}\right).
\end{eqnarray}
\end{proposition}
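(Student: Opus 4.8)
The plan is to write $I$ as a sum of indicator functions over the radar transceivers that can act as interference sources, and to evaluate its mean by combining Campbell's theorem for the Poisson process with the timing‑overlap probability already isolated in the single‑interferer analysis. Concretely, place the victim receiver at the origin; by Slivnyak's theorem the remaining transceivers still form a homogeneous Poisson process of density $\lambda$ on $\mathbb{R}^2$. For a transmitter located at $\mathbf{x}$, let $g(\mathbf{x})$ be the probability that the pulse it emits toward the one neighbor it is currently sensing gets reflected back to the origin and survives both the IF filter and the decision stage. Since each transmitter senses exactly one neighbor, it contributes at most one interfering pulse, so $E[I]=\lambda\int_{\mathbb{R}^2}g(\mathbf{x})\,d\mathbf{x}$.

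The next step is to factor $g(\mathbf{x})$ into a timing part and a spatial part, which are independent under the modeling assumptions. The timing part is exactly the quantity appearing in Proposition \ref{prop:wrong} and Corollary \ref{cor:wrong}: conditioned on an interfering pulse being present, Lemma \ref{lem:uniform} gives that its arrival epoch is uniform over the relevant repetition period, and by Proposition \ref{prop:wrong} it produces a decision‑level interference precisely when that epoch lies in $[0,T_p-T_{\min}]$, i.e. with probability $(T_p-T_{\min})/T_p$. The spatial part is the probability that the transmitter at $\mathbf{x}$ actually picks, among its neighbors inside the sensing radius $d_s$, one whose location makes the path transmitter$\,\to\,$reflector$\,\to\,$origin short enough (of order the round‑trip length $2d_s$) to deliver appreciable power to the victim; invoking the assumption that the sensed neighbor is chosen uniformly, this equals the ratio of the area of the admissible reflector region to the area $\pi d_s^2$ of the full sensing disk $D(\mathbf{x},d_s)$.

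Substituting and pulling out the constant timing factor, $E[I]=\dfrac{T_p-T_{\min}}{T_p}\,\dfrac{\lambda}{\pi d_s^2}\int_{\mathbb{R}^2}\bigl|\{\text{admissible reflectors for a transmitter at }\mathbf{x}\}\bigr|\,d\mathbf{x}$. The remaining double integral is handled by Fubini: writing the inner area as an integral of a reflector‑position indicator and interchanging the order of integration decouples it into elementary one‑ and two‑dimensional integrals over disks, which collapse to the closed form stated in the proposition.

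The part I expect to be the main obstacle is the spatial factor. One must justify replacing ``the transmitter picks uniformly one of its Poisson‑distributed neighbors'' by ``the sensed reflector is uniform on $D(\mathbf{x},d_s)$'' — which is exact only after averaging and is where the $\pi d_s^2$ normalization enters — then describe the admissible‑reflector region correctly from the power and neighborhood constraints of this subsection, and carry out the geometric integral without double counting transmitter/reflector pairs. Everything else, namely the reduction via Campbell's and Slivnyak's theorems and the reuse of the uniform‑timing argument from Proposition \ref{prop:wrong}, is routine.
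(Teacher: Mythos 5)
Your skeleton is essentially the paper's own proof in different clothing: the Campbell/Slivnyak reduction $E[I]=\lambda\int g(\mathbf{x})\,d\mathbf{x}$ is equivalent to the paper's decomposition $I=\sum_i I_i$ with $E[I]=E[M]\,E[I_i]$ and a uniformly distributed conditional transmitter position, the timing factor $(T_p-T_{\min})/T_p$ is imported from Lemma \ref{lem:uniform} exactly as in the appendix, and your ``admissible reflector region'' is the paper's lens $C_1\cap C_2$ (the intersection of the receiver's and the transmitter's sensing disks of radius $d_s$), with the spatial factor $\mathrm{area}(C_1\cap C_2)/(\pi d_s^2)$ coming from the sensed neighbor being uniform on the transmitter's disk.

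The gap is that you stop precisely where the content of the proposition lies. You never specify the admissible region beyond ``path length of order $2d_s$,'' and you assert rather than show that the final integral ``collapses to the closed form stated.'' This is not a routine omission: if you execute your own Fubini step with the natural region $C_1\cap C_2$, the interchange gives $\int_{\mathbb{R}^2}\mathrm{area}\bigl(D(\mathbf{0},d_s)\cap D(\mathbf{x},d_s)\bigr)\,d\mathbf{x}=\pi^2 d_s^4$, hence $E[I]=\lambda\pi d_s^2\,(T_p-T_{\min})/T_p$, which is off by a factor of $2\pi$ from the claimed $\tfrac{1}{2}\lambda d_s^2\,(T_p-T_{\min})/T_p$. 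The paper arrives at its constant through a specific chain of normalizations in Appendix \ref{appx:proof} (taking $E[M]=2\lambda\pi d_s^2$, a factor $2(T_p-T_{\min})/T_p$ in $P(\text{interfere})$, the $(\pi d_s^2)^2$ normalization inside the integral, and the tabulated value $\tfrac{1}{4\pi}$), none of which your plan reproduces or replaces. So either you must define a different admissible region (e.g., one incorporating the power/round-trip constraint you allude to) and redo the area integral for it, or you must exhibit explicitly the normalization that converts your Campbell integral into the stated constant; as written, the last step of the proposal would not deliver the proposition's formula.
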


\section{Detection of Radar Interference}\label{sec:detection}
In this section, we discuss the detection of radar interference. We first analyze the prominent features of radar interference. Then, we use machine learning to classify legitimate signal and interference. 

\subsection{Features for Classification}

\begin{figure}
  \centering
  \includegraphics[scale=0.32]{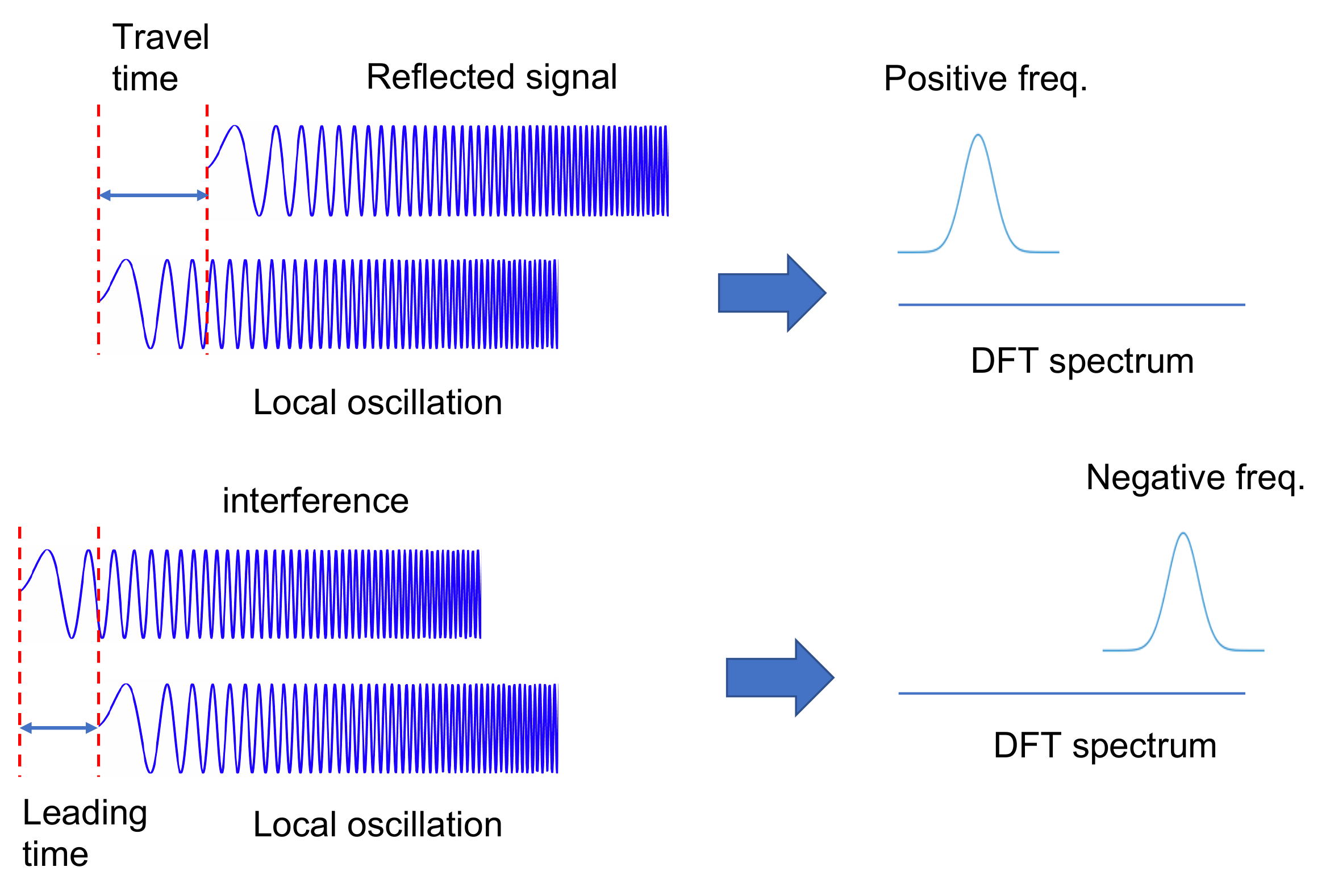}
  \caption{The generation of negative spectrum peak by interference.}\label{fig:ratio}
  \vspace{-0.1in}
\end{figure}

The following two features are used to distinguish the interference and legitimate signal:
\begin{itemize}
\item Power ratio: In legitimate reflected signals with FMCW waveforms, the distance is estimated from the beat frequency, which is proportional to the signal traveling time (thus proportional to the distance). The local oscillation will start before the reflected signal returns. However, when interference is received, the starting time of the interference is random and thus could result in an arrival time leading the local oscillation. This results in a negative beat frequency in the IM output, as illustrated in Fig. \ref{fig:ratio}. Therefore, we use the ratio of the powers in the positive frequency (i.e., the lower half of the DFT spectrum) and negative frequency (i.e., the higher half of the DFT spectrum), to distinguish the interference and legitimate signal. More precisely, the power ratio is defined as
\begin{eqnarray}
R=\frac{\sum_{k=M/2}^{M-1}|X_k|^2}{\sum_{k=0}^{M/2-1}|X_k|^2}.
\end{eqnarray}
\item Doppler: The velocity of the target is estimated via the phase change of successive chirps, namely
\begin{eqnarray}
V_n=\frac{\hat{\phi}_k-\hat{\phi}_{k-1}}{4\pi \lambda T_c},
\end{eqnarray}
where $\lambda$ is the wavelength, and $\hat{\phi}_k$ is the estimated phase of the $k$-th chirp, which is obtained from angle of the frequency spectrum at the peak amplitude. Since the interference is noncoherent with the local oscillation, the phase difference between the successive chirps is random, due to the random initial phase difference. Therefore, we use the average magnitude of velocity estimation $\bar{V}$ as the second feature of interference. 
\end{itemize}

\subsection{Experimental Results}

\subsubsection{Experiment Setup}

\begin{figure}[!t]
\centering
\includegraphics[width=2.6in]{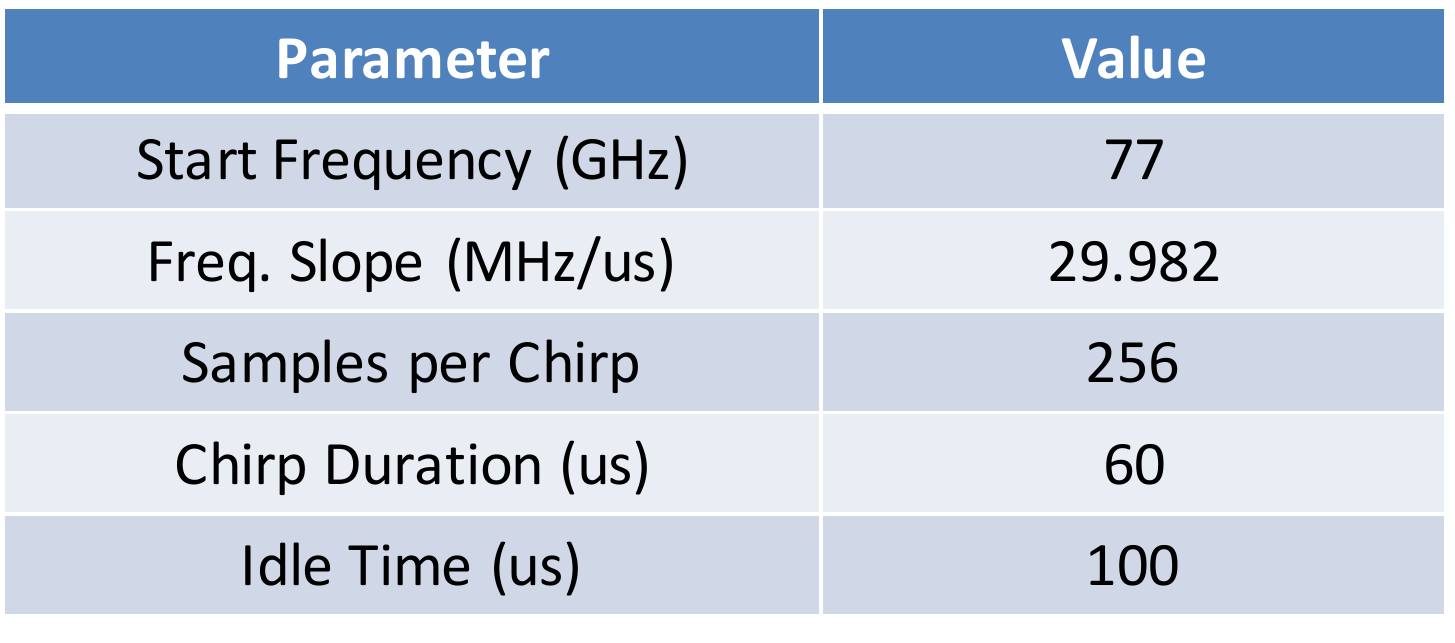}
\caption{Parameters of the experiment}\label{fig:parameters}
\end{figure}

Two mmWave radars (TIAWR1243/1642) in the 77GHz band are used for the experiment. The corresponding parameters are given in Fig. \ref{fig:parameters}. The two radar antennas are placed pointing to each other, thus causing interference. The features are calculated every four chirps. In total 36950 chirps are collected for the interference and 3692 chirps are measured for the legitimate reflected signal. The 2-dimensional features are plotted in Fig. \ref{fig:training}, where the distinction between the interference and signal can be easily observed.

\begin{figure}
  \centering
  \includegraphics[scale=0.36]{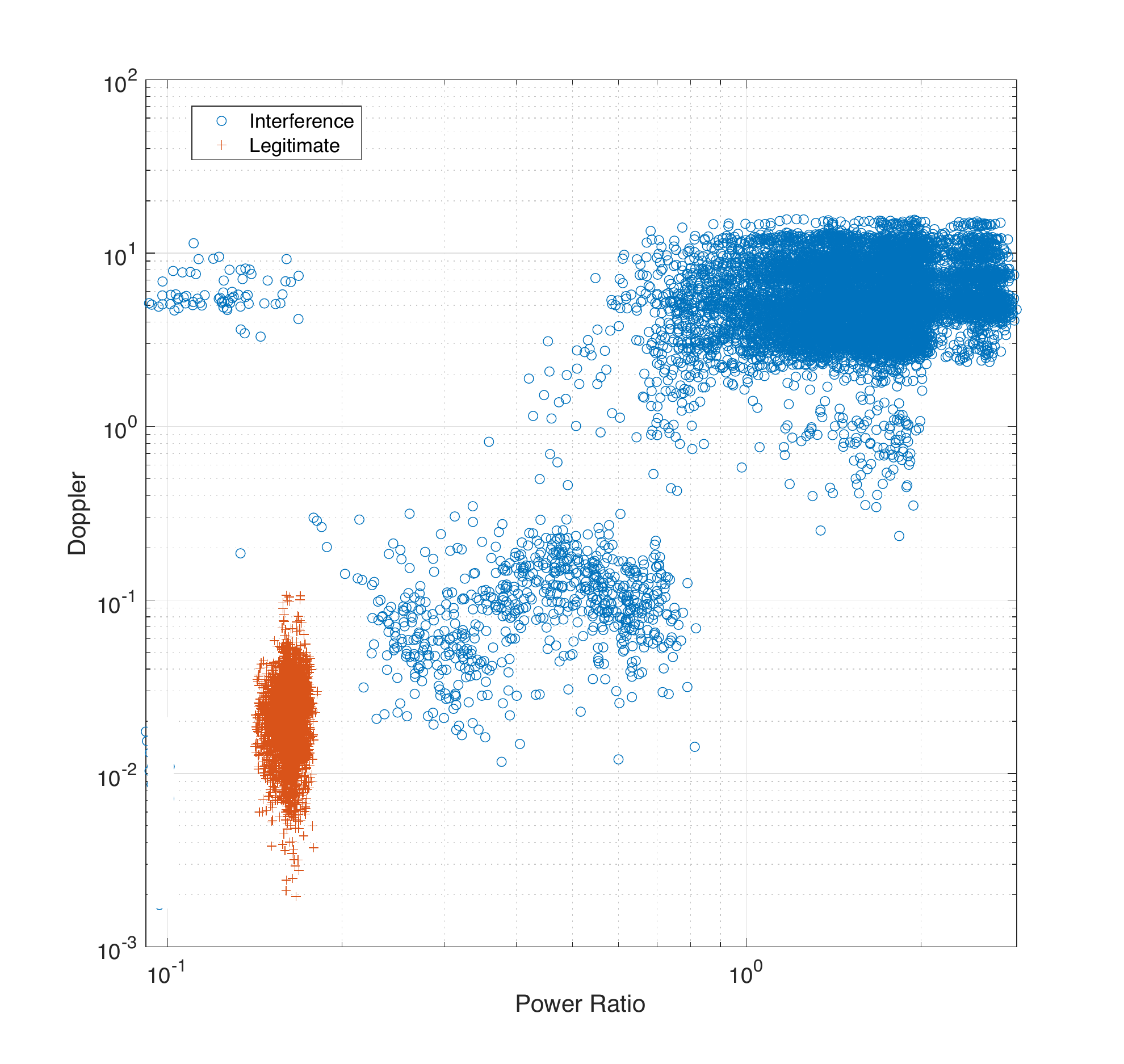}
  \caption{The scatter diagram of training data.}\label{fig:training}
  \vspace{-0.1in}
\end{figure}

\subsubsection{Classification}
The training data is used to train an SVM based classifier. 3692 interference chirps and 3692 signal chirps are collected for testing the SVM classifier. Their features are shown in Fig. \ref{fig:testing}. The trained SVM classifier can distinguish the interference and signal with 100\% correct rate in the testing data set. 

\begin{figure}
  \centering
  \includegraphics[scale=0.4]{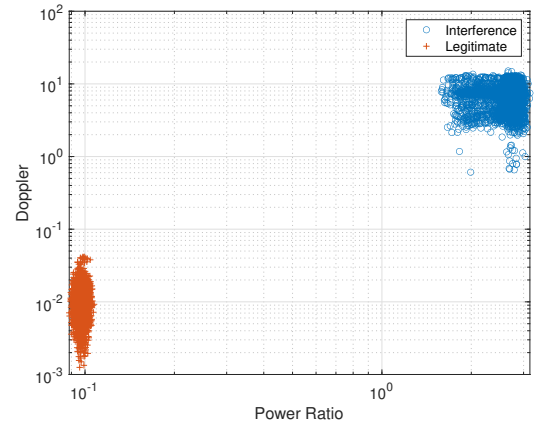}
  \caption{The scatter diagram of testing data.}\label{fig:testing}
  \vspace{-0.1in}
\end{figure}

We also tested the classification without extracting the features. We used the vector consisting of the magnitude spectrum of each chirp and the phase difference of two successive chirps, thus forming a 512-dimensional vector. We used the SVM and linear discrimination analysis (LDA) approaches. In both approaches, the training data set achievers 100\% correct rate, while the testing data attains very bad performance: all the legitimate signal in the testing set has been recognized as interference. This is because that the scenario of signal in the testing data is not observed in the training set. This implies that, without the prior information of the distinction between interference and signal, the black-box classification cannot attain a good performance when no sufficient training data has been collected. 

\section{Radar Interference Management}\label{sec:avoidance}
In this section, we discuss approaches to management the interference of radar impulses. The basic approach is to diversity the radar pulse parameters, such as timing and frequency slope, such that the interference is averaged or avoided. 

\subsection{Time Diversity} 
In the analysis of the impact of interference, we notice that, if an IF interference occurs and the two radar transceivers share the same parameters, it will always exist unless they change beam directions. This is due to the common pulse period, such that, if the time offset of pulses between different nodes is small (which is necessary for an IF interference), it will always be small. To address this problem, we propose to use different pulse periods for different nodes, namely $T_p^1$, ..., $T_p^N$. Then, even if an interfering pulse is very close to a legitimate pulse and thus causes an IF interference, the subsequent pulses will run out of beat and thus remove the IF interference, as illustrated in Fig. \ref{fig:beat}. 

\begin{figure}[!t]
\centering
\includegraphics[width=3in]{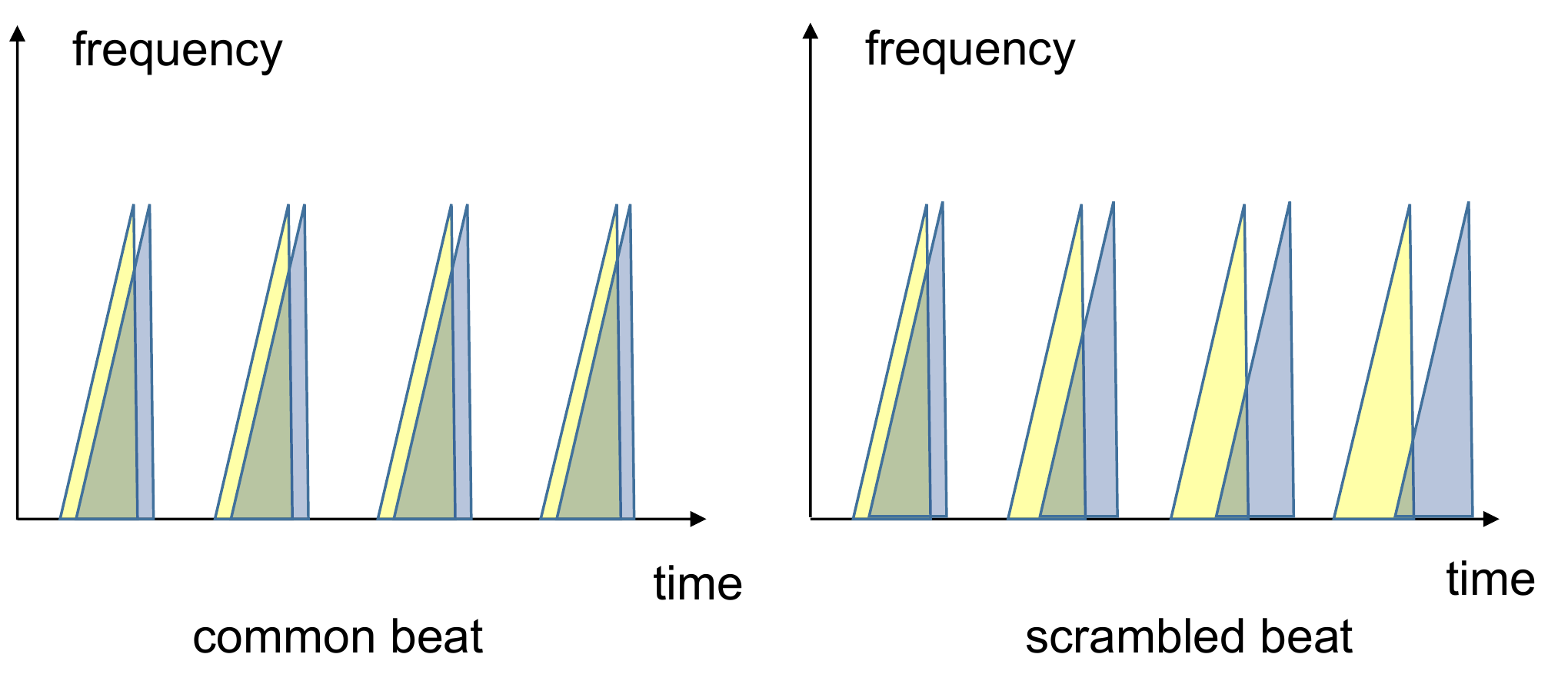}
\caption{An illustration of synchronized and scrambled beats of pulses}\label{fig:beat}
\label{pattern}
\end{figure}

\subsection{Slope Diversity}

\begin{figure}[!t]
\centering
\includegraphics[width=2.7in]{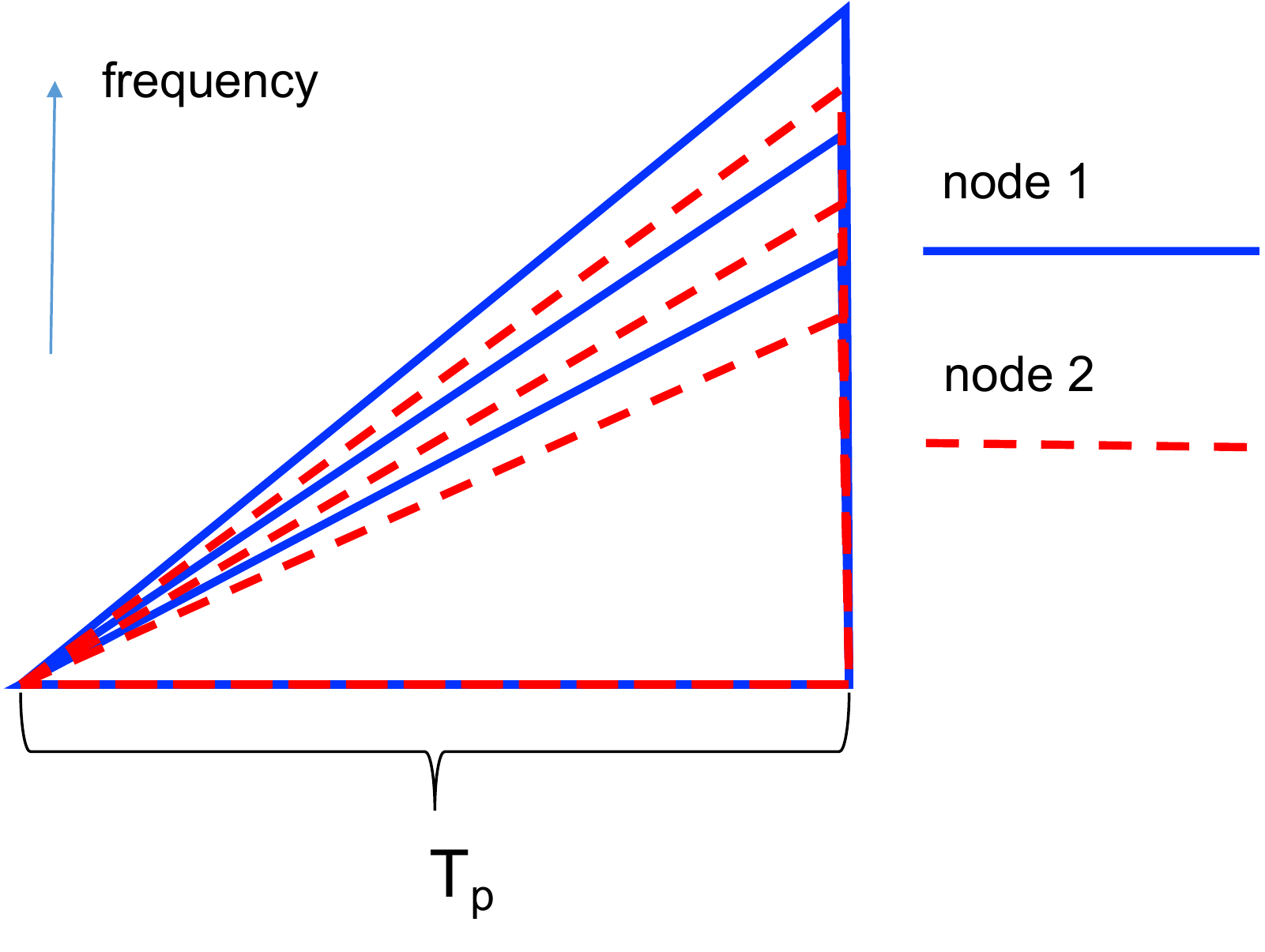}
\caption{An illustration of slope diversity for two nodes}\label{fig:diverse}
\label{pattern}
\end{figure}

\begin{figure}[!t]
\centering
\includegraphics[width=2.3in]{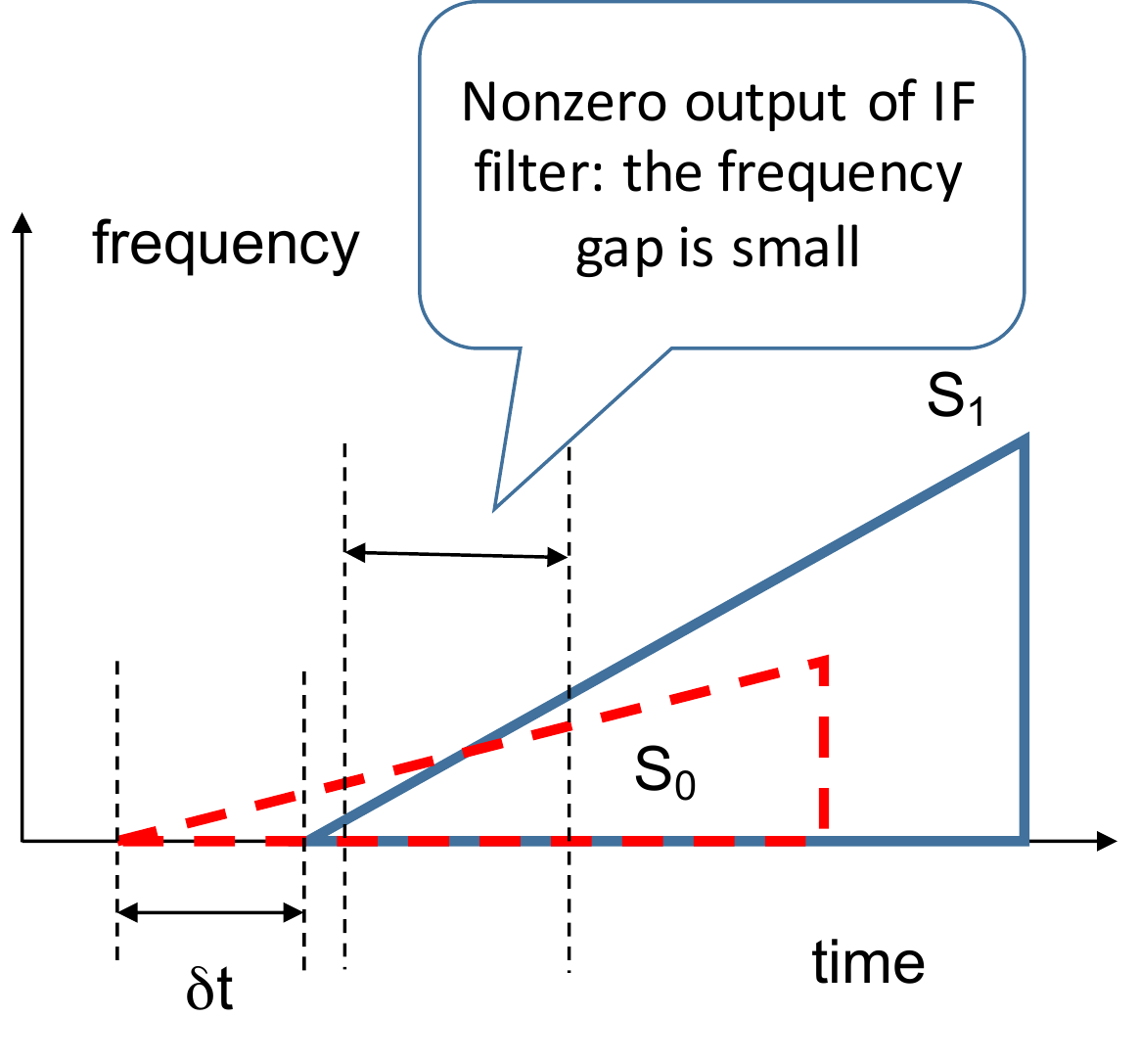}
\caption{Mixing of pulses with different slopes}\label{fig:gap}
\label{pattern}
\end{figure}

To avoid interference, we propose the diversity of frequency slope $S$, namely different radars use different values of $S$. The rationale is illustrated in Fig. \ref{fig:diverse}: when a radar receiver with $S_1$ is interfered by a radar with $S_2$, the frequency of the mixer output changes in an affine manner, namely $f(t)=(S_1-S_2)(t-t_0)+f(t_0)$, where $t_0$ is the beginning time of the interference. Due to the large values of $S_1$ and $S_2$ (and thus a large $|S_1-S_2|$ in a probabilistic manner), the beat frequency $f(t)$ increases very fast and leaves the passband of the IF filter quickly. Therefore, the time duration of interference will be very short and can be easily removed, thus reducing the radar interference.

\section{Leveraging Radar Interference}\label{sec:leverage}
In this section, we discuss how to leverage the information brought by the radar interference. For simplicity, we assume that the radar receiver can perfectly distinguish radar interference from its own radar signal. We further assume that the radar receiver can estimate the direction of the interference with an open angle of uncertainty, namely the source of the interference (another radar transmitter for direct interference, and a reflector for indirect interference). 

\subsection{Information Fusion}
We assume that nearby nodes can communicate with each other (e.g., via V2V communication protocols), or send information to a fusion center (e.g., to a roadside computing device via V2I infrastructure). For simplicity, we assume that a roadside base station receives reports of interference from multiple vehicles, which consist of the information of time, angle and amplitude of the interference. When there is no base station, the vehicles can exchange information and carry out the inference based on the information of interference. Hence, the major challenge is how to extract information from the interference report.

\subsection{World Lines Representations}
We leverage the concept of world line from relativity to describe the dynamics of the mobile radar transceivers, as illustrated in Fig. \ref{fig:space_time}, where the reference radar transceiver records two events of interference. The trajectory (world line) of each mobile radar transceiver is denoted by $(t,\mathbf{x}(t))\in \mathbb{R}\times \mathbb{R}^2$. The event of receiving an interference is a point in the space-time.

\begin{figure}
  \centering
  \includegraphics[scale=0.45]{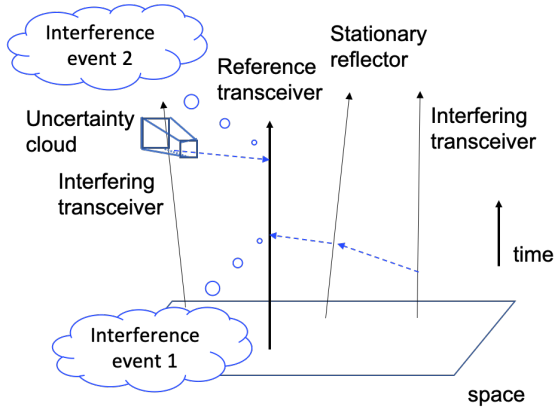}
  \caption{An illustration of world lines and events in the space-time.}\label{fig:space_time}
  \vspace{-0.1in}
\end{figure}

Consider a radar receiver, which detects an interference. It estimates the following information for the interference source:
\begin{itemize}
\item Direction: Using its antenna array, it provides a range of the directional angle, whose set is denoted by ${\Omega}$.
\item Distance: Using the received power strength, it estimates the range of distance, which is denoted by $\mathcal{D}$.
\item Timing: Using the distance estimation, light speed and local clock, it estimates the range $\mathcal{T}$ of the starting time for the interference (direct transmission or reflection). 
\end{itemize}
Therefore, the receiver is able to determine the space-time range ${\Omega} \times \mathcal{D} \times \mathcal{T}$. We call such regions uncertainty clouds. 

\subsection{Target Detection and Tracking}

\begin{figure}
  \centering
  \includegraphics[scale=0.45]{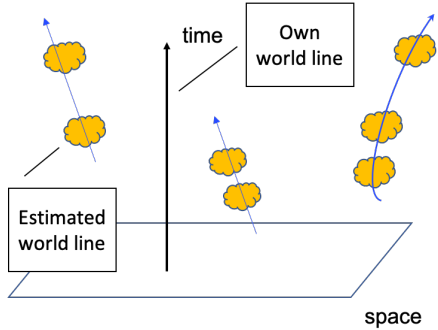}
  \caption{An illustration of clustering the clouds.}\label{fig:clouds}
  \vspace{-0.1in}
\end{figure}

The goal of the reconstruction of target, given the interference record, is to cluster the clouds, as illustrated in Fig. \ref{fig:clouds}. Roughly speaking, if multiple clouds can be transversed by a line (or a curve with small curvature), we can consider these clouds as being generated by a moving target, where the curve is the world line of the target. Then, we are facing the tradeoff between model complexity and precision:
\begin{itemize}
\item Model complexity: The reconstruction precision is optimized if each cloud is considered as a target. However, there will be too many targets, thus making the model too complex.
\item Precision: Using a single curve to pass all clouds will be the simplest model. However, when the curvature is bounded, this yields substantial errors. 
\end{itemize}
Besides the above tradeoff, we also need to consider the practical constraint: the target cannot move beyond speed limit ($v_m$). Note that the speed is given by the cotangent of the intersection angle of the world line and the space plane (here 2-dimensional space is considered). Based on the above considerations, we propose the target detection and tracking algorithm in Procedure \ref{alg:search}.

\begin{algorithm}[H]
	\caption{}\label{alg:search}
	\begin{algorithmic}[1]
		\STATE{Calculate the maximal angle $\theta=\cot^{-1}(v_m)$.}
		\STATE{Collect all clouds into a set $\Phi$.}
		\STATE{Set the set of world lines, $L$, to be empty }
		\WHILE{$\Phi$ is nonempty}
   		        \STATE{Select the earliest cloud $\phi$.}
		        \STATE{Search the line passing $\phi$ and the maximum number of other clouds.}
		        \STATE{Remove these clouds from $\Phi$.}
		        \STATE{Add the line to $W$.}
		\ENDWHILE
		\STATE{Output the world lines in $W$, as well as the corresponding clouds.}
	\end{algorithmic}
\end{algorithm}

\section{Multiuser Sensing}\label{sec:multiuser}
In this section, we propose a framework called multiuser sensing, in order to identify and decouple interferences, given the knowledge of known radar waveform(s). The multiuser sensing is a counterpart of celebrated multiuser detection in communications \cite{Verdu1998}. Compared with the celebrated multiuser detection theory \cite{Verdu1998} in communications, which usually assumes time synchronicity, the proposed multiuser sensing framework has a similar linear representation but focuses on asynchronous signals. The time asynchronicity is essential, since the time offset of signal provides information on the target distance, and thus requires processing in the $Z$-transform domain.

\subsection{Formulation}
We formulate the radar interference as the following linear superposition signal model:
\begin{eqnarray}\label{eq:linear}
\mathbf{y}(z)=\mathbf{A}(z)\mathbf{x}(z)+\mathbf{w}(z),
\end{eqnarray}
or equivalently the following elementwise form:
\begin{eqnarray}
y_n(z)=\sum_{k=1}^K A_{nk}(z)x_k(z)+w_n(z),n=1,...,K,
\end{eqnarray}
where $\mathbf{A}$ is a $K\times K$ matrix which consists of the channel gains (when $A_{nk}\neq 0$ and $k\neq n$, user $k$ interferes user $n$; $A_{nn}$ could be zero if the signal of user $n$ is not reflected back to the transmitter), $K$ is the total number of users in the area, which is unknown in advance, and both $\mathbf{y}$ and $\mathbf{x}$ are $K$-dimensional vectors, denoting the received and transmitted signals, respectively, and $\mathbf{w}$ is the vector of thermal noise. Note that the elements in $\mathbf{A}$, $\mathbf{y}$ and $\mathbf{x}$ are polynomials of $z$, namely the $Z$-transforms of the sampled signals and channel propagations, where $z^{-1}$ means unit delay. In particular the elements in $\mathbf{A}$ are monomials, since the propagation of one radar signal is determined by the magnitude and delay: when $A_{nk}=gz^{-\tau}$, the channel amplitude gain is $g$ and the time delay of signal is $\tau$. A simple example is $\mathbf{x}=(1+2z^{-1}+z^{-2};2+z^{-1}+z^{-1})$ and $\mathbf{A}=\begin{pmatrix}z^{-1},0.5z^{-3}\\z^{-3},5z^{-1}\end{pmatrix}$, which indicates (a) the sampled waveforms of users 1 and 2 are $(1,2,1)$ and $(2,1,1)$; (b) the time delays of reflected signals are 1 sample period, while the time delays of interference are both 3. 

The goal of multiuser sensing algorithm is for each user $n$, given the observation $y_n(z)$, to determine whether the received signal is interfered, to decouple the signal and interference, and to estimate the target if there is any. The challenge is that the matrix $\mathbf{A}$, even its dimensions (how many users in the region), is unknown in advance. Fortunately, the knowledge that the users can leverage includes (a) the waveforms in $\mathbf{x}$ are assumed to be known; (b) the polynomial order (namely the time delay) of each elements in $\mathbf{A}$ cannot be too large since it is limited by the distance between the users; (c) in typical situations, there are very few off-diagonal elements in $\mathbf{A}$, namely a user is not simultaneously interfered by many interferers. Given the above conditions, each user $n$ estimates the corresponding row $\mathbf{a}_n=(A_{n,1},A_{n,2},...,A_{n,K})$, given the observation $y_n$ and the known waveforms $\mathbf{x}$. Note that, although the linear form in (\ref{eq:linear}) is similar to that of multiuser detection, the goal and features are substantially different, which results in significantly different methodologies. The case of single-user decision will be extended to the case of joint decision; e.g., users 1, 2 and 3 can communicate with each other, then they can make the decision based on $y_1(z)$, $y_2(z)$ and $y_3(z)$.

\subsection{Time Domain Algorithm}
We begin with a time-domain algorithm for the multiuser sensing, which is more suitable for radar waveforms designed in the time domain, e.g., the pseudorandom waveform (noise radar).

\subsubsection{Single Waveform}
When all radar users have the same waveform $x_0(t)$ and the received signal is noise free, the row vector $\mathbf{a}_n$ of user $n$ can be obtained from the polynomial division:
\begin{eqnarray}
\frac{y_n(t)}{x_0(t)}=\sum_{j=1}^m a_j z^{-n_j},
\end{eqnarray} 
where $a_j$ is the amplitude of signal and $n_j$ is the time delay (in the unit of sampling period). This polynomial division decouples the interference and signal. The terms $a_jz^{-n_j}$ with abnormal $n_j$ (e.g., being too large) will be claimed as interference and canceled from the received signal. The remainder of the signal, if there is any, will be considered as the legitimate signal. 

\subsubsection{Multiple Waveforms}
We now extend the single-waveform case to multiple possible waveforms $\{p_m(z)\}_{m=1,...,M}$ (e.g., pulse radar, FMCW, OFDM, et al). Then, the received signal $y(z)$ is in the ideal generated by $\{p_m(z)\}_{m=1,...,M}$. The multiuser sensing is converted to the problem of decomposition:
\begin{eqnarray}\label{eq:decomp}
y_n(t)=\sum_{m=1}^M a_m(z) p_m(z)+R(z),
\end{eqnarray}
where the polynomial coefficient $a_m(z)$ determines the amplitudes and delays of users using $p_m(z)$, and $R(z)$ is the residue due to noise or waveform imperfection. A major challenge is that the decomposition is not unique. For example, consider $M=2$ and $y(t)=p_1(z)+p_2(z)$. Assume that $p_1(z)$ and $p_2(z)$ are coprime, which implies the existence of $a(z)$ and $b(z)$ such that $a(z)p_1(z)+b(z)p_2(z)=1$. Therefore, $y(t)$ can also be decomposed to $y(t)=c(z)p_1(z)+d(z)p_2(z)$, where $c(z)=y(z)a(z)$ and $d(z)=y(z)b(z)$. We will leverage the prior information on the orders of the elements in $\mathbf{A}(z)$ to exclude improper decompositions; e.g., the decomposition $c(z)p_1(z)+d(z)p_2(z)$ in the above example can be excluded because the orders of $c(z)$ and $d(z)$ will be too large. 

For the decomposition in (\ref{eq:decomp}), we use the following polynomial division algorithm (Page 47 in \cite{Smith2014}), where $LT(f)$ is the leading term of polynomial $f$, e.g., $LT(2z^{-2}+3z^{-1}+4)=2z^{-2}$. In the output of Algorithm \ref{alg:division}, the coefficients $\{a_i\}_{i=1,...,M}$ indicates the magnitude and delay of signals with different waveforms, and the residue $R$ indicates the remainder noise and the precision of the decoupling. 

\begin{algorithm}[H]
	\caption{}\label{alg:division}
	\begin{algorithmic}[1]
		\STATE{Given the received signal $y(z)$ and waveforms $\{p_1(z),...,p_M(z)\}$}
		\STATE{Set $a_i=0$, $i=1,...,M$, $R=0$ and $g=y$.}
		\WHILE{$g\neq 0$}
   		        \STATE{Set $Matched=false$.}
		        \FOR{i=1,...,M}
		                 \IF{$LT(p_i)$ divides $LT(g)$}
        		                     \STATE{Set $h=LT(g)/LT(p_i)$.}
		                     \STATE{Set $a_i=a_i+h$.}
		                     \STATE{Set $g=g-p_ih$.}
		                     \STATE{Set $Matched=true$; break.}
		                 \ENDIF
		        \ENDFOR
		        \IF{not $Matched$}
		              \STATE{Set $R=R+LT(g)$.}
		              \STATE{Set $g=g-LT(g)$.}
		        \ENDIF
		\ENDWHILE
		\STATE{Return $\{a_i\}_{i=1,...,M}$ and $R$.}
	\end{algorithmic}
\end{algorithm}

\subsection{Frequency Domain Algorithm}
When the radar waveforms are more featured by their frequency spectrums, such as the FMCW or OFDM radar systems, the time-domain algorithms will be difficult, since prohibitively many time samples are needed (e.g., for the rapidly changing FMCW waveforms), thus making the polynomial processing in the time-domain algorithm difficult. In such a scenario, we propose to use the frequency-domain approach. Again, we start from the case of identical radar waveforms. The key observation is that, when $z=e^{j\Omega}$, $y_n(z)$ and $x_0(z)$ equal the discrete-time Fourier transform (DTFT), at frequency $\Omega$, of received signal and radar waveform, respectively. Meanwhile, given $z=e^{j\Omega}$, $a_n(z)=\sum_{k=1}^K A_{nk}(z)$ equals the frequency response at frequency $\Omega$, if (\ref{eq:linear}) is considered as a linear system. Therefore, one can compute the frequency spectrum of received signal and radar waveform, and thus obtain the frequency response of the linear system characterized by $\mathbf{A}$. An inverse DTFT (IDTF) generates $a_n(z)$ and thus the elements in $\mathbf{A}$. 

To this end, we consider the Fourier transform of (\ref{eq:decomp}). For simplicity in concept and derivation, we consider the continuous-time Fourier transform (CTFT), which is given by
\begin{eqnarray}\label{eq:decomp2}
Y_n(jw)=\sum_{m=1}^M A_m(jw) P_m(jw)+R(jw),
\end{eqnarray}
where
\begin{eqnarray}
A_m(jw)=\sum_{l=1}^{r_l} a_{ml}e^{-jw\tau_{ml}},
\end{eqnarray}
where $a_{ml}>0$ is the amplitude of the $l$-th branch and $e^{-jw\tau_{ml}}$ is the corresponding phase. 

\section{Numerical Simulations}\label{sec:numerical}
In this section, we provide the numerical simulation results that demonstrate the validity of the proposed algorithms for radar interference. 

\subsection{Simulation Setup}
We consider a crossroad as illustrated in Fig. \ref{fig:crossing}. The two orthogonal roads are each 600 meters long, and the speed limit is 30mph. The vehicles take turn to go through the crossroad. They may decelerate to avoid collision, and resume acceleration when the road ahead is cleared. We consider only the radar in the front of each vehicle. The radiation pattern is determined by that of a 16-element array. Only the direct interferences from the vehicles in the opposite lanes are considered. Random timing offsets are assumed for the radar pulses at different vehicles, while the time drift due to the imperfection of oscillator is not taken into account. Each radar receiver discards the received radar pulses that result in negative distance or distances beyond the maximum range.

\begin{figure}
  \centering
  \includegraphics[scale=0.45]{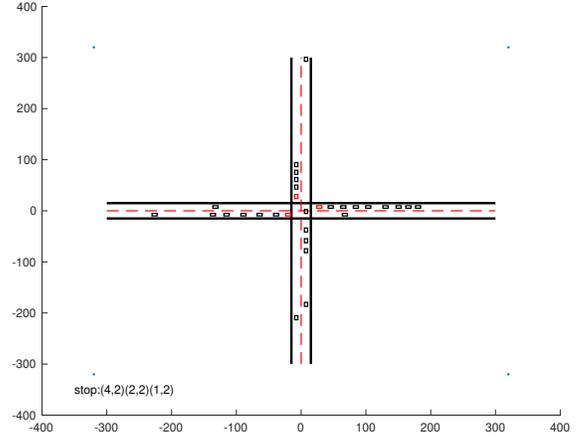}
  \caption{Crossroad setup in the simulation.}\label{fig:crossing}
  \vspace{-0.1in}
\end{figure}

We assume FMCW radar is used, where we use the parameters of TI AWR1243BOOST Pack (76-81GHz). We assume that the chirp duration is 60us and the pulse period is 100us. The frequency increasing slope is 29,982GHz/s. 

\subsection{Statistics of Radar Interference}

The simulation is carried out for a time period of 7 hours, during which 4051 vehicles passed the crossroad. The traffic follows a Poisson process, where the average emergence rate of new vehicles is 0.64 vehicles per second. In Fig. \ref{fig:num_amp}, the histograms of times of radar interference events and the amplitudes are shown. We observe that, during the journeys of most vehicles, there is only one interference event. The probability of more than one interference events is much smaller. The amplitude of interference is more like a bell shape, centered at around -110dBW. 

\begin{figure}
  \centering
  \includegraphics[scale=0.45]{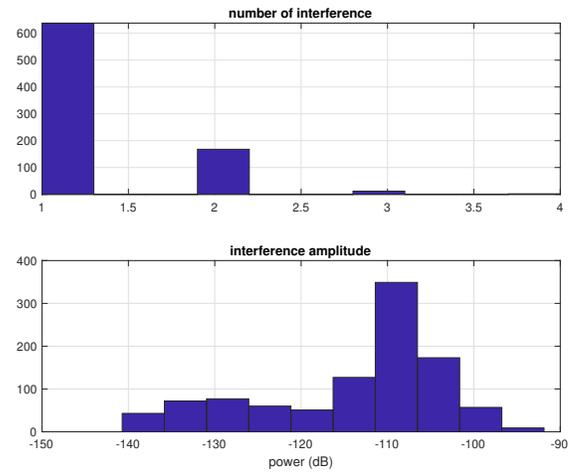}
  \caption{Histograms of times and amplitudes (in dBW) of interference.}\label{fig:num_amp}
  \vspace{-0.1in}
\end{figure}

The histograms of interference time duration and inter-interference interval are shown in Fig. \ref{fig:dur_per}. We observe that both histograms are similar to the exponential distributions. Using the Lilliefors test at the 5\% significance level, the distribution of the interference duration is shown not to be exponential. Meanwhile, the inter-interference interval is shown to be exponentially distributed. We plotted the CDFs of both the empirical data and exponential distribution, for both variables, in Fig. \ref{fig:CDF}. We observe that, while the empirical and exponential CDF curves are very close to each other, the interference duration has a slightly longer tail than the exponential distribution, which implies possibly longer interference duration due to the waiting time at the crossroad.

\begin{figure}
  \centering
  \includegraphics[scale=0.45]{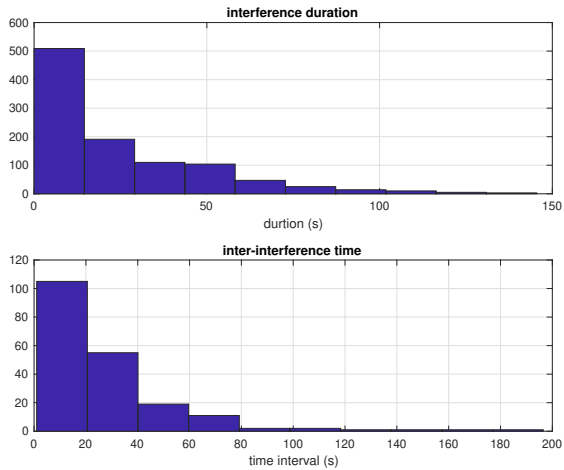}
  \caption{Histograms of interference duration and inter-interference interval.}\label{fig:dur_per}
  \vspace{-0.1in}
\end{figure}

\begin{figure}
  \centering
  \includegraphics[scale=0.45]{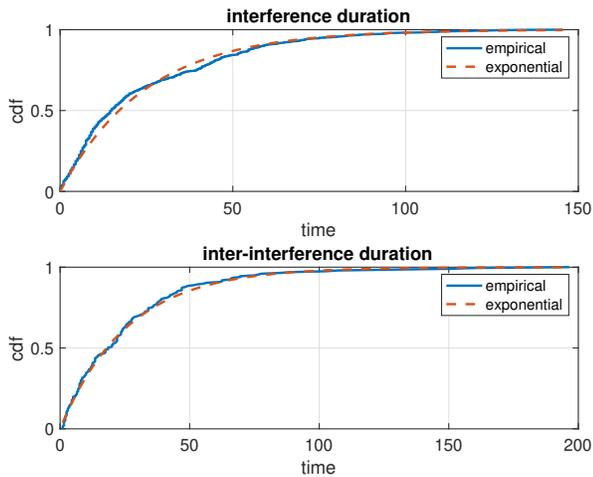}
  \caption{Comparison of CDFs of empirical data and exponential distribution.}\label{fig:CDF}
  \vspace{-0.1in}
\end{figure}

\begin{figure}
  \centering
  \includegraphics[scale=0.45]{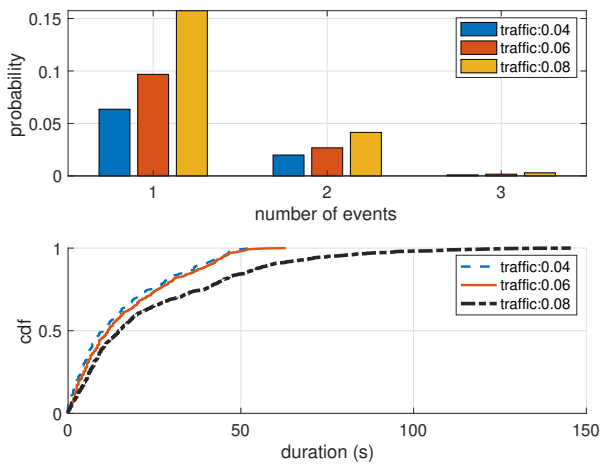}
  \caption{Comparison of different traffic densities.}\label{fig:density}
  \vspace{-0.1in}
\end{figure}

\subsection{Radar Interference Avoidance}

\begin{figure}
  \centering
  \includegraphics[scale=0.4]{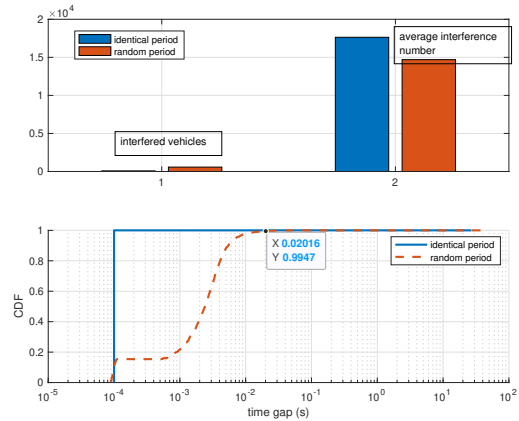}
  \caption{Performance of interference avoidance by time diversity.}\label{fig:flexible}
  \vspace{-0.1in}
\end{figure}

The performance of interference avoidance using time diversity, compared with identical radar impulse period, is shown in Fig. \ref{fig:flexible}. For the beat scramble, we consider a uniform distribution between 0.9 and 1.1 times of the standard impulse period (100us). In both simulations for the time diversity and identical period, 601 vehicles passed the crossroad, only 77 vehicles experience interference in the identical period case, in a sharp contrast to 577 in the time diversity case. Therefore, much more vehicles suffer from interference when time diversity is used. However, this does not mean that the time diversity is a worse design. As shown in the upper part of Fig. \ref{fig:flexible}, the average number of interference (averaged over all passing vehicles) is slightly less for the time diversity case than the identical period case. In the CDF curves of inter-interference-impulse intervals shown in the lower part of Fig. \ref{fig:flexible}, the intervals in the time diversity case are statistically greater, while the intervals of the identical period are mostly a single period (since once interference occurs it will be for every impulse within a time period of seconds). Therefore, although the time diversity cannot substantially decrease the number of interference impulses and even increases the number of interfered vehicles, the interference is substantially scattered among vehicles and time. Similarly to communications, scattered interference can be more easily isolated and mitigated, thus benefiting the performance of radar sensing.

\begin{figure}
  \centering
  \includegraphics[scale=0.4]{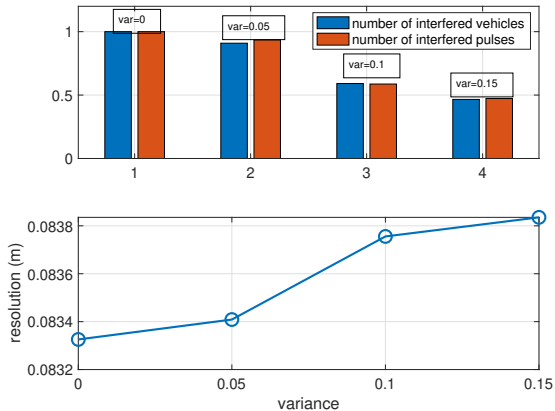}
  \caption{Performance of interference avoidance by slope diversity.}\label{fig:sdiversity}
  \vspace{-0.1in}
\end{figure}

The performance of interference avoidance using slope diversity is shown in Fig. \ref{fig:sdiversity}, where the IF output with time duration less than 80\% of the standard pulse duration is declined as an interference. The slope is randomly and uniformly selected with standard deviation ranging from 0 to 0.15. The numbers of interfered vehicles and interfered radar pulses are shown in the upper part of Fig. \ref{fig:sdiversity}, where the numbers are normalized by the ones of zero variance (thus no slope diversity). We observe that the interference is substantially reduced, particularly to less than half when the standard deviation is 0.15. The lower part of Fig. \ref{fig:sdiversity} shows the degradation of average resolution of ranging distance when the variance of slope is increased. Note that the distance resolution is given by $\frac{c}{2ST_c}$. We observe that the increase of distance resolution is marginal (in the magnitude of millimeters). Indeed, the bottleneck of FMCW radar, in the setup of the simulations, is the sampling rate at the receiver. Therefore, the increase of slope variance almost does not degrade the ranging performance, while substantially reducing the radar interference. 

\subsection{Leveraging Interference}
\begin{figure}
  \centering
  \includegraphics[scale=0.4]{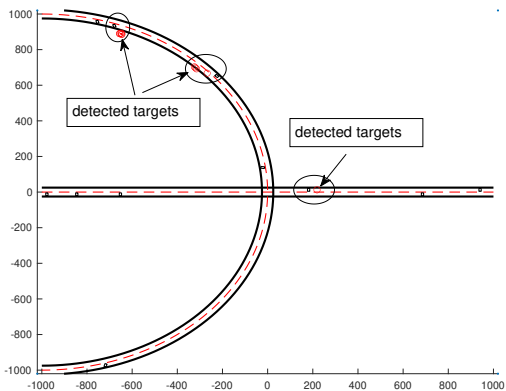}
  \caption{Examples of detected targets by leveraging interference.}\label{fig:detected_target}
  \vspace{-0.1in}
\end{figure}

We implemented the simulation in Matlab for the details of the algorithm in Procedure \ref{alg:search}. In particular, we consider two freeways with a bridge (thus without a stop sign or traffic light), as illustrated in Fig. \ref{fig:detected_target}. The average traffic intensity is 0.06 new vehicles per second. Slight random perturbations are considered in the velocities of the vehicles. Several examples of detected targets, by leveraging the interference, are marked in Fig. \ref{fig:detected_target}. We observe that the detected targets are close to real vehicles. The histogram of position estimation errors of the targets detected by interference, within a time span of 10 seconds, is shown in Fig. \ref{fig:hist_error}. We observe that, for the majority of the cases, the position estimation errors are tens of meters, which is much greater than that of radar ranging (which could be within a meter). The main reason for the errors is the ambiguity in the beam directions due to the beam open angle. However, the error is tolerable and the estimation is useful for spotting long-distance (around 1km) vehicles. We also observe that there are some cases of significant estimation errors. Due to the numerical analysis of the errors, we found that the significant error is due to the blurring of image due to radar receiver movement. This source of error will be addressed in our future research. 

\begin{figure}
  \centering
  \includegraphics[scale=0.4]{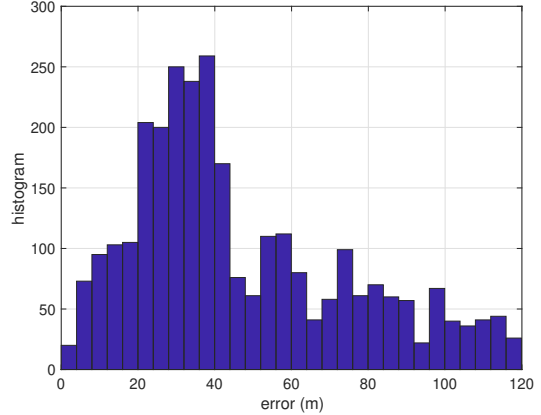}
  \caption{Histogram of detection errors of targets using interference.}\label{fig:hist_error}
  \vspace{-0.1in}
\end{figure}

\section{Conclusions}\label{sec:conclusions}
In this paper, we have discussed the mitigation and leverage of interference in mobile radar networks. Different levels of interference management have been proposed and analyzed, including the interference avoidance/average, detection, decoupling and leverage. The validity of the proposed algorithms has been demonstrated using numerical simulations, in the context of radar sensing in vehicular networks. 

\appendices
\section{Proof of Prop. \ref{prop:bound}}\label{appx:bound}
\begin{proof}
According to the conclusion in Lemma \ref{lem:uniform}, if a node within the range $d_s$ of the radar receiver being studied, the probability that it causes interference is given by (\ref{eq:prob}). We further assume that each node within the range $[0,d_s]$ to the receiver under study is being illuminated by a certain radar transmitter, thus causing interference due to the system model. Then, we have
\begin{eqnarray}
p_I&=&\sum_{n=0}^\infty \frac{e^{-\lambda \pi d_s^2}(\lambda \pi d_s^2)^n}{n!}P(interfering)\nonumber\\
&\leq &\sum_{n=0}^\infty \frac{e^{-\lambda \pi d_s^2}(\lambda \pi d_s^2)^n}{n!}\left(1-\left(1-\frac{T_p-T_{\min}}{T_p}\right)^n\right)\nonumber\\
&=&1-\sum_{n=0}^\infty \frac{e^{-\lambda \pi d_s^2}(\lambda \pi d_s^2)^n}{n!}\left(1-\frac{T_p-T_{\min}}{T_p}\right)^n\nonumber\\
&=&1-e^{-\lambda \pi d_s^2}e^{\left(1-\frac{T_p-T_{\min}}{T_p}\right)\lambda \pi d_s^2}\nonumber\\
&=&1-e^{-\left(\frac{T_p-T_{\min}}{T_p}\right)\lambda \pi d_s^2}.
\end{eqnarray}
This concludes the proof.
\end{proof}

\section{Proof of Prop. \ref{prop:prob}}\label{appx:proof}
\begin{proof}
All possibly interfering radar transmitters are located within a sphere centered at the radar receiver under study and of radius $2d_s$. The number $I$ of interfering radar signals, as a random variable, can be written as
\begin{eqnarray}
I=\sum_{i}^{M}I_i,
\end{eqnarray}
where $M$ is the number of radar transmitters within the radius $2d_s$ and $I_i$ is the indicator function for the event that the $i$-th radar transmitter interferes the radar receiver. Taking expectation, we have
\begin{eqnarray}
E[I]=E\left[\sum_{i}^{M}E[I_i|M]\right],
\end{eqnarray}
where $E[I_i|M]$ is the conditional expectation given $M$. Due to the uniform distribution of the node positions, $E[I_i|M]$ is independent of $M$. Moreover, $E[I_i|M]=E[I_j|M]$ due to the symmetry of positions. Therefore we have
\begin{eqnarray}
E[I]=E[M]E[I_i]=2\lambda \pi d_s^2E[I_i].
\end{eqnarray}

\begin{figure}[!t]
\centering
\includegraphics[width=2.2in]{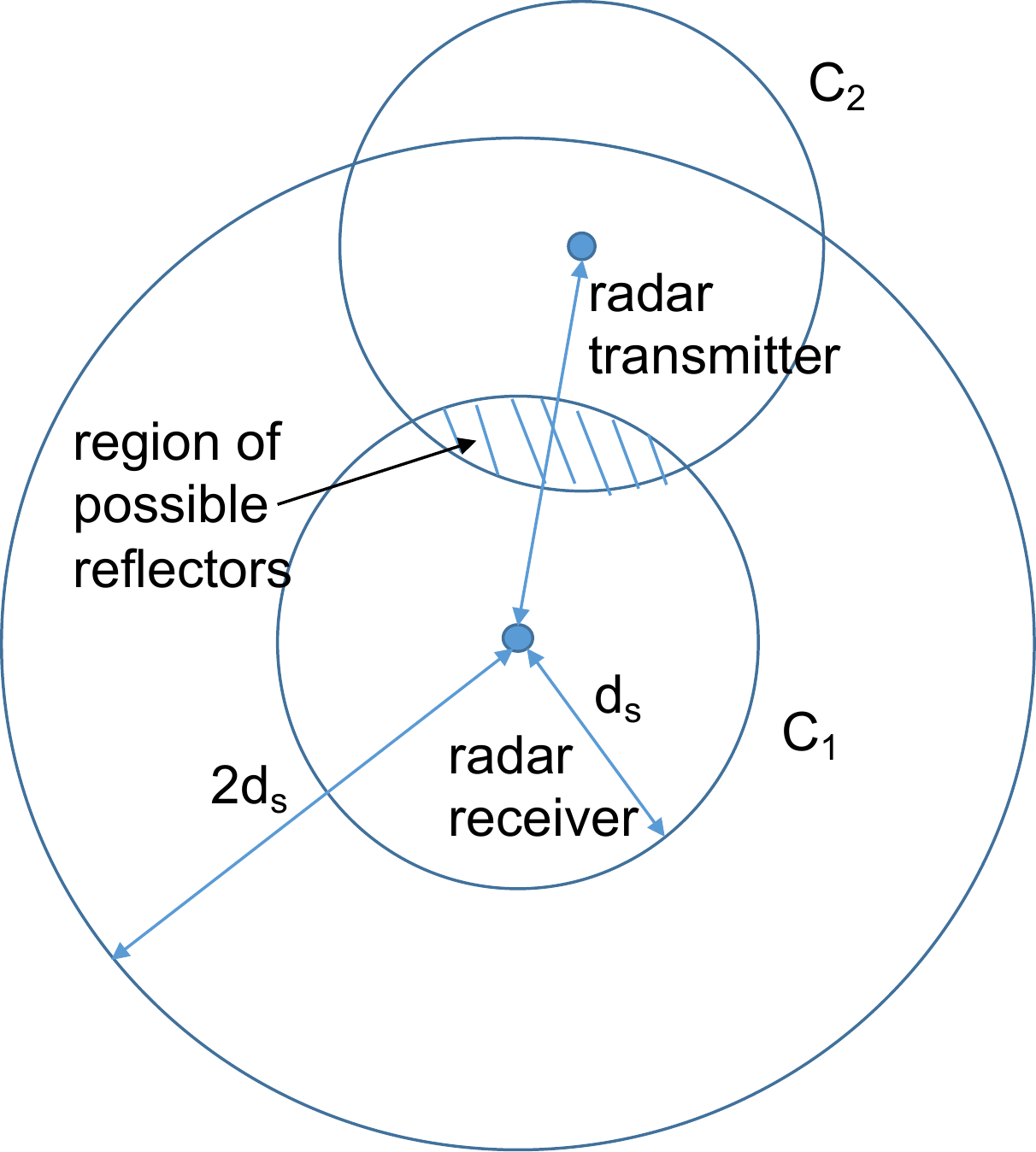}
\caption{Geometry of the transmit and receive radars}\label{fig:circles}
\label{pattern}
\end{figure}

Now, we need to evaluate $E[I_i]$ for an arbitrary radar transmitter $i$. We notice that all interfering radar transmitters are within the circle centered at the radar receiver with radius $2d_s$. The geometry is shown in Fig. \ref{fig:circles}: the distance between the radar transmitter and receiver is $r$ ($r<2d_s$), and all the reflecting objects that are illuminated by the transmitter and interfere the receiver are located within the shaded area. Therefore, the probability for transmitter $i$ to interfere the receiver is given by
\begin{eqnarray}
P(interfere)=\frac{area(C_1\cap C_2)}{area(C_2)}\left(\frac{2(T_p-T_{\min})}{T_p}\right).
\end{eqnarray}

Calculating the shaded area and integrating over the random position of the transmitter, we have
\begin{eqnarray}
E[I_i]&=&\left(\frac{T_p-T_{\min}}{T_p}\right)\nonumber\\
&\times&\int_0^{2d_s} \frac{2d_s^2\cos^{-1}\left(\frac{r}{2d_s}\right)-\frac{r}{2}\sqrt{d_s^2-\frac{r^2}{4}}}{(\pi d_s^2)^2}rdr.
\end{eqnarray}

By looking up tables of indefinite integral, we have
\begin{eqnarray}
\int_0^{2d_s} \frac{2d_s^2\cos^{-1}\left(\frac{r}{2d_s}\right)-\frac{r}{2}\sqrt{d_s^2-\frac{r^2}{4}}}{(\pi d_s^2)^2}rdr=\frac{1}{4\pi}.
\end{eqnarray}
This concludes the proof.
\end{proof}

\end{document}